\long\def\comment#1{\ifdim\overfullrule>0pt{\sf[{#1}]}\fi}
\newtheorem{theorem}{Theorem}
\newtheorem{definition}{Definition}
\newtheorem{conjecture}{Conjecture}
\newtheorem{lemma}{Lemma}
\newcommand{\X}{{\cal X}}
\newcommand{\Fp}{{\cal F}_p}
\newcommand{\Se}{S}
\newcommand{\Pe}{P}
\newcommand{\N}{{\cal N}}
\newcommand{\h}{$\frac{1}{2}$}
\newcommand\set[1]{\left\{#1\right\}} 
\newcommand\concept[1]{\textit{#1}} 
\newcommand{{\Ex}}{{{\text{{E}}}}}
\newcommand{{\RR}}{{\sc Rel-Lin-Eq}}
\newcommand{{\RRs}}{{\sc Rel-Lin-Eq }}
\begin{document} 
\bibliographystyle{alpha}
\def\Sketch{\noindent{\bf Sketch:\ \ }}
\def\eps{\epsilon}

\title{Rounding semidefinite programs for large-domain problems \\via Brownian motion}

\author{Kevin L. Chang\thanks{Research done at Max-Planck-Institut f\"ur Informatik, 
  Saarbr\"ucken, Germany.  {\tt kchang@mpi-inf.mpg.de}.}
\and Alantha Newman\thanks{Research done at
  Max-Planck-Institut f\"ur Informatik, Saarbr\"ucken, Germany.
CNRS-Universit\'e Grenoble Alpes. {\tt
    alantha.newman@grenoble-inp.fr}.}}


\maketitle

\begin{abstract}
We present a new simple method for rounding a semidefinite programming
relaxation of a constraint satisfaction problem.  We apply it to the
problem of approximate angular synchronization studied in
\cite{makarychev2011complex}.  Specifically, we are given directed
distances on a circle (i.e., directed angles) between pairs of
elements and our goal is to assign the elements to positions on a
circle so as to preserve these distances as much as possible.
The feasibility of our rounding scheme is based on properties of the
well-known stochastic process called Brownian motion.  Based on
computational and other evidence, we conjecture that this rounding
scheme yields an approximation guarantee that is very close to the
best-possible guarantee (assuming the Unique-Games Conjecture).
\end{abstract}

\section{Introduction}

We present an alternate approach to the {\sc Relaxed Linear Equations
  $\bmod ~p$} ({\sc Rel-Lin-Eq}) problem studied in
\cite{makarychev2011complex}.  Our new approach is also based on
rounding a semidefinite programming relaxation but uses a 
different rounding technique.  Based on computational evidence and
other justification, we believe this approach has essentially the same
approximation guarantee of $.854$ for {{\sc Rel-Lin-Eq}} as proven for
a different algorithm presented in \cite{makarychev2011complex}.

We are given a set $E$ of equations in the form of 
$x_j -x_i \equiv d_{ij} (\bmod ~p)$.  
Let $\X = \{x_i\}$ be the set of elements and let
$n = |\X|$.  We assign each element in $\X$ a (integral) value from the
set $[0,p)$.  For a fixed assignment, an equation has value $x_j - x_i
  \equiv d_{ij} \pm y_{ij} (\bmod ~p)$, for $y_{ij} \leq p/2$.  Since
  $y_{ij}$ can have a nonnegative value of at most $p/2$, we divide
  $y_{ij}$ by $p/2$ in order to obtain a normalized value between 0
  and 1.  Our goal is to find an assignment that maximizes the sum
  $\sum_{ij \in E} (1 - 2 y_{ij}/s)$.  More precisely, we formulate
  our objective function as follows.
\begin{eqnarray}
\max \sum_{ij \in {E}} \left(1-\frac{2\cdot \min\{
(x_j-x_i-d_{ij}) \bmod p,~ -(x_j -x_i - d_{ij}) \bmod p \}}{p} \right). \nonumber
\end{eqnarray}

This problem generalizes the {\sc Max-Cut} problem: For any edge $ij$
in a graph, we write the constraint $x_j -x_i \equiv p/2 (\bmod ~p)$.
Then an $\alpha$-approximation to {\sc Rel-Lin-Eq} yields an
$\alpha$-approximation for {\sc Max-Cut}.  It can be viewed as an
approximate version of the {\em angular synchronization} problem
studied by Singer~\cite{singer2011angular}.  Originally, our
motivation was to develop rounding methods for constraint satisfaction
problems whose solutions to assignment-constraint based SDPs can have
the property that no pair of assignment vectors have a high dot
product despite the solution having a high objective value.  (e.g., A
problem with this property is {\sc Maximum Acyclic Subgraph}, which
has since been proven to be Unique-Games hard to approximate to within
a factor greater than $\frac{1}{2}$~\cite{guruswami2011beating}.  On
the other hand, {\sc Unique Games} does not have this property.)

Our rounding scheme is based on properties of the well-known
stochastic process called Brownian motion.  Theoretically, this
procedure could be applied to other problems that can be modeled using
the standard assignment-constraint based SDP framework (i.e., SDP
formulation $(P^+)$ in Section \ref{SDP}).  However, it does seem
tailor-made for our particular objective function.  It is also
reminiscent of ``sticky'' random walks used in constructive approaches
to discrepancy minimization~\cite{bansal2010constructive}, although these results
focus on assigning each element a binary value and one of our main
motivations was to study how to approximate large-domain problems.

\subsection{Organization}

After presenting our quadratic formulations and relaxations in Section
\ref{sec:quad-prog}, we present our rounding procedure in Section
\ref{sec:round}.  In Section \ref{sec:brownian}, we discuss Brownian
motion and how it relates to our rounding procedure.  Then we prove
that our rounding procedure is feasible most of the time; precisely,
at least $.96$ of the time, it assigns a (non-random) position to a
variable.  First, we prove this for a continuous process (Section
\ref{sec:sign_change}) and then for a discrete process (Section
\ref{sec:discrete}).  Finally, in Section \ref{sec:correlated}, we
state a conjecture regarding the correlation of two random walks,
which is supported by extensive computational experiments.  A positive
resolution to this conjecture would be one way to prove that this
rounding procedure has a guarantee close to the best-known guarantee
of $.854$~\cite{makarychev2011complex} (and close to the best-possible
guarantee of $.878$ under the Unique-Games Conjecture).

\section{Quadratic Programs}\label{sec:quad-prog}

For each variable $x_i$, we have a set of $p$ unit vectors, $v_i^0, v_i^1,
v_i^2, \dots, v_i^{p-1}$, for a total of $pn$ vectors.  
For $b > a$, let $d(a,b) = \min\{b-a, p - (b-a)\}$.  Note that $d(a,b) =
d(b,a)$.  Let $\Fp$ denote a particular (fixed) set of $p$ vectors with
the property that for $v_a, v_b \in \Fp$, $v_a \cdot v_b =
1-\frac{4 d(a,b)}{p}$.  For example, if $p = 8$, then the
set $\Fp$ can be the following eight vectors.

\begin{center}
\begin{tabular}{cccccccc}
$v_i^0$ & $v_i^1$ & $v_i^2$ & $v_i^3$ & $v_i^4$ & $v_i^5$ & $v_i^6$ &
  $v_i^7$ \vspace{2mm}
\\
\h & -\h & -\h & -\h & -\h & \h & \h & \h \vspace{1.5mm}\\
\h & \h & -\h & -\h & -\h & -\h & \h & \h \vspace{1.5mm}\\
\h & \h & \h & -\h & -\h & -\h & -\h & \h \vspace{1.5mm}\\
\h & \h & \h & \h & -\h & -\h & -\h & -\h 
\end{tabular}
\end{center}
This formula for creating such a set $\Fp$ of vectors can be
generalized for any even value of $p$ (where the absolute value of
each coordinate of each vector is $\frac{\sqrt{2}}{\sqrt{p}}$).  We
obtain the following quadratic program for \RR.  Let $\Pe$ denote the
set of integers in $[0,p)$.

\vspace{4mm}
\fbox{\parbox{14cm}{
{\bf A Quadratic Program ($\mathit{Q}$):}
\begin{align}
 \max & \sum_{ij \in E} \frac{1 + v_i^0 \cdot v_j^{d_{ij}}}{2} \nonumber\\
v_i^a \cdot v_i^{b} & =   1- \frac{4\cdot d(a,b)}{p}, & \forall x_i \in
\X, ~a,b \in \Pe, \label{constell}\\
v_i^a \cdot v_j^b & =  v_i^{k+a}\cdot v_j^{k+b},  & \forall x_i,x_j
\in \X,~a,b,k \in \Pe, \label{equal-vec}\\
v_i^a \cdot v_i^a & = 1, &  \forall x_i \in \X, a
\in \Pe,
\label{unit-vec}\\
v_i^k & \in \Fp, & \forall x_i \in \X, k \in
\Pe. \label{strict-enforce}
\end{align}}}
\vspace{4mm}

For each variable $x_i \in \X$, the corresponding set of $p$ vectors
has the same configuration up to rotation, reflection and translation.
This is enforced by Constraints \eqref{constell} and \eqref{unit-vec}.
In an integral solution, the set of $p$ vectors corresponding to
variable $x_i$ is identical to the set of $p$ vectors corresponding to
the variable $x_j$, for all variables $x_i, x_j \in \X$.  This follows
from the fact that all vectors belong to $\Fp$.  The only difference is
that vectors in the two sets may have different labels (i.e., one set
of vectors can be viewed as a rotation of the other set).  Then the
relative values or positions of two variables only depends on the
rotations of the labels.  In other words, if for variables $x_i$ and
$x_j$, the same vectors have the same labels, then the two variables
will be assigned to the same position.  If $v_i^k = v_j^{k+3}$ and
$x_i$ is in position $k$, $x_j$ should be in position $k+3$.  Given an
integral solution, we can determine the position of each variable by
picking a vector, and assigning each variable to the label to which
that vector corresponds for that variable.

\subsection{Semidefinite Relaxations}\label{SDP}

To obtain a semidefinite relaxation of ($Q$), we remove Constraint
\eqref{strict-enforce} and require only that each $v_i^k \in
      {\mathbbm{R}}^{pn}$.  Note that even in the semidefinite
      relaxation, the set of $p$ vectors corresponding to a particular
      variable $x_i$ have the same configuration for each variable up
      to rotation, reflection and translation.  We refer to the set of
      vectors corresponding to a variable $x_i$ as a {\em
        constellation} $C_i$.  We show that certain properties hold
      for each constellation.

\vspace{4mm}
\fbox{\parbox{14cm}{
{\bf A Semidefinite Program ($\mathit{P}$):}
\begin{align}
\max & \sum_{ij \in E} \frac{1 + v_i^0 \cdot v_j^{d_{ij}}}{2} & \nonumber\\
v_i^a \cdot v_i^{b} & =   1- \frac{4\cdot d(a,b)}{p}, &  \forall
x_i \in \X, ~a,b \in \Pe, \label{constell_sdp}\\
v_i^a \cdot v_j^b & =  v_i^{k+a}\cdot v_j^{k+b}, &  \forall x_i,x_j
\in \X,~a,b,k \in \Pe, \label{equal-vec_sdp}\\
v_i^a \cdot v_i^a & =  1, &  \forall x_i \in \X, a
\in \Pe,
\label{unit-vec_sdp}\\
v_i^k & \in {\mathbbm{R}}^{pn}, & \forall x_i \in \X, k \in \Pe.\label{strict-enforce_sdp}
\end{align}}}
\vspace{4mm}

Let ${\bf v_p}$ be a vector in ${\mathbbm{R}}^{\frac{p}{2}}$ in which
each entry is $\frac{\sqrt{2}}{\sqrt{p}}$.  Let ${\bf e}_i \in
\mathbbm{R}^{\frac{p}{2}}$ be the indicator vector which has a $1$ in
the $i^{th}$ position and 0 elsewhere.  For $k$ such that $0 \leq k
\leq p/2$, we define $w_k$ as follows:
$$w_k = \sum_{i=0}^{k} \frac{\sqrt{2}}{\sqrt{p}}{\bf e}_i.$$

\begin{definition}
Let the constellation $C_0$ be the set of $s$ unit vectors $\{v_0^0, v_0^1,
v_0^2, \dots v_0^{p-1}\}$ defined as follows.  For $k$ such that $0
\leq k \leq p/2$, define $v_0^k = -2 \cdot w_k + {\bf v_p}$.  For $k$
such that $p/2 < k < p$, let $v_0^k = - v_0^{k-p/2}$.
\end{definition}

\begin{lemma}\label{same_constell}
For any $x_i \in \X$, the constellation $C_i = \{v_i^0, v_i^1,
\dots, v_i^{p-1}\}$ is equivalent to
the constellation $C_0$ up to rotation, reflection and translation. 
\end{lemma}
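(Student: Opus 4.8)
The plan is to show that every constellation $C_i$ arising in a feasible solution of $(P)$ is an isometric copy of the explicit constellation $C_0$ from the Definition, so that the claimed equivalence up to rotation, reflection and translation reduces to an equality of Gram matrices. The key observation is that, by Constraints \eqref{constell_sdp} and \eqref{unit-vec_sdp}, the pairwise dot products among the $p$ vectors of $C_i$ are completely determined: $v_i^a\cdot v_i^a = 1$ and $v_i^a\cdot v_i^b = 1 - \tfrac{4 d(a,b)}{p}$ for all $a,b\in\Pe$. This is exactly the same Gram matrix as that of the vectors $\{v_0^0,\dots,v_0^{p-1}\}$, provided we verify that the explicit vectors in the Definition realize these dot products. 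Once both constellations have the same Gram matrix, a standard fact from linear algebra gives an orthogonal transformation of the ambient space carrying one to the other — and an orthogonal transformation is precisely a composition of rotations and reflections; together with the freedom to translate (which does not change dot products of the {\em differences}, and is needed only to match the affine placement) this yields the asserted equivalence.

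Concretely, I would proceed in three steps. First, compute $v_0^a\cdot v_0^b$ directly from the formula $v_0^k = -2 w_k + {\bf v_p}$ for $0\le k\le p/2$, using that $w_k\cdot w_\ell = \tfrac{2}{p}(\min\{k,\ell\}+1)$ and $w_k\cdot {\bf v_p} = \tfrac{2}{p}(k+1)$ and $\|{\bf v_p}\|^2 = \tfrac{2}{p}\cdot\tfrac{p}{2} = 1$; this should give $v_0^a\cdot v_0^b = 1 - \tfrac{4|a-b|}{p}$ for $0\le a,b\le p/2$, and in particular $\|v_0^k\|^2=1$. Then extend to indices in $(p/2,p)$ using the antipodal rule $v_0^k = -v_0^{k-p/2}$: for such indices one checks that $|a-b|$ gets replaced by $p-(b-a)$ in the appropriate range, so that the dot product becomes $1-\tfrac{4 d(a,b)}{p}$ with the $\min$ in $d$ accounting exactly for the sign flip. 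This establishes that $C_0$ itself satisfies \eqref{constell_sdp} and \eqref{unit-vec_sdp}. Second, observe that any $C_i$ in a feasible SDP solution has, by those same constraints, an identical Gram matrix. Third, invoke the factorization fact: two finite families of vectors with equal Gram matrices are related by an orthogonal map of their spans (extend to $\mathbbm{R}^{pn}$ by acting as the identity on the orthogonal complement); reflections and rotations generate the orthogonal group, and allowing an additional translation gives the full stated equivalence.

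The main obstacle — really the only technical content — is the bookkeeping in the first step: correctly tracking how the index differences $|a-b|$ turn into $d(a,b)=\min\{b-a,\,p-(b-a)\}$ once the antipodal vectors $v_0^k=-v_0^{k-p/2}$ enter, and confirming that the sign changes align so that every pair, including those straddling the ``equator'' at $p/2$, lands on the value $1-\tfrac{4d(a,b)}{p}$. One must also double-check the edge cases $a=b$, $|a-b|=p/2$ (where $v_0^a = -v_0^b$, so the dot product should be $-1$, consistent with $d(a,b)=p/2$), and verify the dimension count is consistent with Constraint \eqref{equal-vec_sdp} (which is automatic once the configuration is rotationally symmetric within each constellation). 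Everything after the Gram-matrix computation is a routine appeal to the orthogonal-factorization lemma, so I would state that lemma once and not belabor it.
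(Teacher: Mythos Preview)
Your approach is correct but differs from the paper's. You argue abstractly: compute the Gram matrix of the explicit constellation $C_0$, observe that Constraints \eqref{constell_sdp} and \eqref{unit-vec_sdp} force $C_i$ to have the identical Gram matrix, and then invoke the standard fact that two point configurations with equal Gram matrices are related by an orthogonal transformation. The paper instead works constructively with the \emph{increment vectors} $v_{ik} = \tfrac{1}{2}(v_i^k - v_i^{k-1})$: it shows via the auxiliary Lemmas~\ref{lemm_length} and~\ref{lemm_overlap} that these increments are pairwise orthogonal of length $\sqrt{2/p}$, so after a rotation they can be taken to be the scaled standard basis vectors, and then $v_i^k = v_i^0 + 2\sum_{j\le k} v_{ij}$ rebuilds exactly the formula defining $C_0$. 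Your route is shorter and avoids the two auxiliary lemmas, at the cost of a case check on the Gram matrix of $C_0$ (the ``bookkeeping'' you flag around the antipodal indices). The paper's route, on the other hand, isolates the orthogonal-increment structure explicitly, which is reused in Section~\ref{sec:brownian} when mapping the process $\{v_i^k\cdot r\}$ to Brownian motion; so its extra work is not wasted. One minor remark: since both configurations consist of unit vectors based at the origin, the orthogonal map alone already suffices and no translation is actually needed---you include it only because the lemma statement does.
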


\begin{proof} Let $v_{ik} = \frac{(v_i^k - v_i^{k-1})}{2}$.  Without loss of
generality, let us assume that $v_{ik} = \frac{\sqrt{2}}{\sqrt{p}}
{\bf e}_k$ for $1 \leq k \leq p/2$.  We can assume this since for all
$k \in \Pe$, we have (i) $||v_{ik}|| = \frac{\sqrt{2}}{\sqrt{p}}$
(Lemma \ref{lemm_length}) and (ii) $v_{ij} \cdot v_{ik} = 0$ for all
$j,k \in \Pe$ such that $j \neq k$ (Lemma \ref{lemm_overlap}).

Note that $v_i^k = v_i^{k-1} + 2 \cdot v_{ik}$.  
This implies that $v_i^{p/2} = - \sum_{k=1}^{p/2} v_{ik}$ and that  
$v_i^0 = \sum_{k=1}^{p/2} v_{ik}$.  Thus, there is some rotation of
the vectors in $C_i$ such that the resulting set of vectors is
equivalent to $C_0$.\end{proof}

\begin{lemma}\label{lemm_length}
For all $x_i \in \X$ and $k \in \Pe$, $||\frac{(v_i^k-v_i^{k-1})}{2}||=
\frac{\sqrt{2}}{\sqrt{p}}$.
\end{lemma}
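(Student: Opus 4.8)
The plan is to compute the squared norm $\left\|\tfrac{v_i^k - v_i^{k-1}}{2}\right\|^2$ directly and read off the answer from the inner-product constraints of $(P)$. Expanding,
\[
\left\|\frac{v_i^k - v_i^{k-1}}{2}\right\|^2 = \frac{1}{4}\Bigl(v_i^k\cdot v_i^k - 2\, v_i^k\cdot v_i^{k-1} + v_i^{k-1}\cdot v_i^{k-1}\Bigr),
\]
I would substitute $v_i^k\cdot v_i^k = v_i^{k-1}\cdot v_i^{k-1} = 1$ using Constraint~\eqref{unit-vec_sdp}, and $v_i^k\cdot v_i^{k-1} = 1 - \frac{4\, d(k-1,k)}{p}$ using Constraint~\eqref{constell_sdp}. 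The two $1$'s cancel and the expression collapses to $\frac{1}{4}\cdot\frac{8\, d(k-1,k)}{p} = \frac{2\, d(k-1,k)}{p}$.

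It then remains to observe that $d(k-1,k) = 1$. For $1 \le k \le p-1$ this is immediate: with $b - a = k - (k-1) = 1$ and $p \ge 2$ we get $d(k-1,k) = \min\{1,\, p-1\} = 1$. Hence the squared norm equals $\frac{2}{p}$, and taking the nonnegative square root gives $\frac{\sqrt{2}}{\sqrt{p}}$, as claimed.

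The only point requiring a little care is the wrap-around case $k = 0$, where $v_i^{k-1}$ must be read as $v_i^{p-1}$ (labels are taken modulo $p$). With this convention, the two adjacent labels $p-1$ and $0$ are at cyclic distance $d(0,p-1) = \min\{p-1,\, 1\} = 1$, so Constraint~\eqref{constell_sdp} gives $v_i^0 \cdot v_i^{p-1} = 1 - \frac{4}{p}$ and the identical computation applies. I do not expect any genuine obstacle here: the entire proof is a one-line algebraic identity fed by the SDP constraints, and the only thing to track carefully is the consistent use of the modular indexing.
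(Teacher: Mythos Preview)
Your proof is correct and follows essentially the same approach as the paper: expand the squared norm, substitute the unit-vector and constellation constraints, and simplify to $2/p$. You are in fact slightly more careful than the paper, explicitly justifying $d(k-1,k)=1$ and handling the wrap-around case $k=0$.
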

\begin{proof}
\begin{eqnarray}
\left(\frac{(v_i^k-v_i^{k-1})}{2}\right)\cdot \left(\frac{(v_i^k-v_i^{k-1})}{2}\right)
& = &  \frac{1}{4} (v_i^k-v_i^{k-1})(v_i^k-v_i^{k-1}) \nonumber \\ &
= & \frac{1}{4} (v_i^k \cdot v_i^k +
v_i^{k-1}\cdot v_i^{k-1} - 2v_i^k \cdot v_i^{k-1}) \nonumber\\
& = & \frac{1}{4}(2 - 2(1- \frac{4}{p})) \nonumber \\
& = & 2/p. \nonumber 
\end{eqnarray}
\end{proof}

\begin{lemma}\label{lemm_overlap}
  For $x_i \in \X$ and for $a,b,c,d \in [0,p/2]$, the vectors
  $(v_i^a-v_i^b)\cdot(v_i^c - v_i^d) = 0$ if $[a,b]$ and $[c,d]$
  are non-overlapping intervals.
\end{lemma}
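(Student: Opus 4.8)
The plan is to expand the dot product bilinearly and substitute the prescribed inner-product values from Constraint~\eqref{constell_sdp}. First I would put the configuration into a canonical form. Write the two intervals as $[a,b]$ and $[c,d]$ with $a\le b$ and $c\le d$ (replacing $v_i^a-v_i^b$ by its negation only changes the sign of the product, so this costs nothing); ``non-overlapping'' then means, after possibly interchanging the roles of the two intervals, that $b\le c$. Hence it suffices to handle the single ordering $0\le a\le b\le c\le d\le p/2$.

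The one observation that makes this work is that on the range $[0,p/2]$ the circular distance $d(\cdot,\cdot)$ agrees with the ordinary distance: for $0\le x\le y\le p/2$ we have $y-x\le p/2\le p-(y-x)$, so $d(x,y)=y-x$. Consequently Constraint~\eqref{constell_sdp} reads $v_i^x\cdot v_i^y = 1-\tfrac{4(y-x)}{p}$ for any two of our four indices, and I would simply compute
\begin{align}
(v_i^a-v_i^b)\cdot(v_i^c-v_i^d)
&= v_i^a\cdot v_i^c - v_i^a\cdot v_i^d - v_i^b\cdot v_i^c + v_i^b\cdot v_i^d \nonumber\\
&= \frac{4}{p}\bigl(-(c-a)+(d-a)+(c-b)-(d-b)\bigr) = 0, \nonumber
\end{align}
since the four constant terms $1$ cancel in pairs and the remaining linear terms telescope to $0$.

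There is no genuine obstacle here; it is a direct computation. The only points deserving a word of care are (i) that $d$ reduces to the absolute difference on $[0,p/2]$ — this is exactly why the hypothesis restricts the indices to $[0,p/2]$, as it would fail on all of $[0,p)$ — and (ii) that the choice of ``which interval lies to the left'' is truly symmetric, so the single ordering $a\le b\le c\le d$ really does cover every case. Degenerate situations, such as the intervals meeting at one point ($b=c$) or an interval collapsing to a point, are absorbed by the same algebra.
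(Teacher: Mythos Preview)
Your proof is correct and follows essentially the same approach as the paper: expand the product bilinearly, substitute the values from Constraint~\eqref{constell_sdp}, and verify that the four distance terms cancel when the intervals are non-overlapping. You are simply more explicit than the paper about normalizing to the ordering $a\le b\le c\le d$ and about why $d(\cdot,\cdot)$ coincides with the ordinary difference on $[0,p/2]$, which is exactly what justifies the paper's terse claim that $d(a,c)+d(b,d)=d(a,d)+d(b,c)$.
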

\begin{proof}
\begin{eqnarray}
(v_i^a-v_i^b)\cdot(v_i^c - v_i^d) & = & v_i^a \cdot v_i^c - v_i^a
\cdot v_i^d - v_i^b \cdot v_i^c + v_i^b\cdot v_i^d \nonumber \\
& = &  -\frac{4 d(a,c)}{p} + \frac{4 d(a,d)}{p} + \frac{4 d(b,c)}{p}
-\frac{4 d(b,d)}{p}.\nonumber
\end{eqnarray}
This equals 0 if the intervals $[a,b]$ and $[c,d]$ are non-overlapping, since then $d(a,c) +
d(b,d) = d(a,d) + d(b,c)$.\end{proof}

Another way to write a semidefinite program is to use a standard formulation based on assignment
constraints (e.g., see Quadratic Program $(Q_2)$ in
\cite{makarychev2011complex}).
  
\vspace{4mm}
\fbox{\parbox{14cm}{
{\bf A Semidefinite Program ($\mathit{P^+}$):}
\begin{align}
\max \sum_{ij \in E} \sum_{k \in \Pe} & \left(p- 2 d(k,d_{ij})\right) u_{i0} \cdot u_{j k} \nonumber\\
u_{ih}\cdot u_{jk} & \geq 0, & x_i,x_j \in \X,
h,k \in \Pe,\label{non-neg-sdp}\\
u_{ih} \cdot u_{ik} & = 0, & x_i,x_j \in \X,
h,k \in \Pe,\\
u_{ih} \cdot u_{jk} & = u_{ih+a}\cdot u_{jk+a}, & x_i,x_j \in \X,
h,k,a \in \Pe,\\
u_{ih} \cdot u_{ih} & = \frac{1}{p}, & \forall x_i \in \X,\\
|\sum_{h \in \Pe} u_{ih} - \sum_{k \in \Pe}u_{jk}|^2 & = 0, & \forall x_i, x_j \in \X,\\
u_{ih} & \in \mathbbm{R}^{pn}, & \forall x_i \in \X, h \in \Pe.
\label{unit-vec_sdp2}
\end{align}}}
\vspace{4mm}

Given a solution for $(P^+)$, we can
construct a solution for $(P)$ as follows.
\begin{eqnarray}
v_i^k & = & \sum_{h=k}^{k+p/2-1} u_{ih} - \sum_{h=k+p/2}^{k+p-1} u_{ih}.\label{vec-relation}
\end{eqnarray}
It is not difficult to see that the transformation in
\eqref{vec-relation} preserves the objective value.
In our computational experiments, we used solutions for $(P^+)$, which
are more constrained than solutions for $(P)$ (e.g., Constraint
\eqref{non-neg-sdp} is not implied by the constraints in $(P)$).
However, we feel it is somewhat clearer to present the rounding
algorithm in the next section based on a solution for $(P)$.

\subsection{Relaxation on an Arbitrarily Large Domain}

Note that we can replace $p$ with an arbitrarily large constant $s$.
Suppose $s$ is a multiple of $p$ (i.e., $s = \ell p$).  Then we can scale
each constraint so that $x_j -x_i \equiv d_{ij} (\bmod ~p)$ becomes
$x_j -x_i \equiv \ell d_{ij} (\bmod ~s)$.  The optimal objective value of
the original and the scaled problem are the same.  Moreover, given a
solution for $(P)$ on the domain of size $p$, we can create a solution
for the scaled problem on the domain of size $s = \ell p$ with the same
objective value without resolving the relaxation $(P)$.
Thus, we can assume that $s$ is an extremely large constant.  We assume this
since our rounding algorithms work best on a large domain. 

\section{Rounding the Relaxation}\label{sec:round}

Our algorithm for \RRs is based on rounding a solution for the semidefinite
relaxation ($P$) presented in Section \ref{SDP}.  The first issue is,
how do we use the constellation of vectors $C_i$ to determine the
position or value of variable $x_i$?  We will consider the following
random process with $s$ steps.  Let $r \in {\mathbbm{R}}^{sn}$ be a
random vector in which each coordinate is chosen according to the
normal distribution $\N(0,1)$.  We can view the $s$ values $r\cdot
v_i^0, r \cdot v_i^1, \dots, r \cdot v_i^{s-1}$ as a discrete random process in which the expected correlation
of $r \cdot v_i^a$ and $r \cdot v_i^{b}$ is given by the dot product
$v_i^a \cdot v_i^b$.

Let us view these $s$ values as a discrete random process on the
interval $[0,s]$.  For a subinterval $[t, t']$, we say time step $q$
is in the interval $[t,t']$ if $d(s\cdot t/2, s\cdot q/2) \leq
d(s\cdot t/2, s\cdot t'/2)$ and $d(s\cdot t'/2, s \cdot q/2) \leq
d(s\cdot t/2, s\cdot t'/2)$.

Given such a random process, we say that there is an {\em extreme sign
  change} with threshold $\alpha$ between times $t$ and $t'$ if
$v^i_t\cdot r \leq -\alpha$, $v^i_{t'}\cdot r \geq \alpha$ and
$v^i_{q} \cdot r < \alpha$ for all $q \in [t,t']$.  Our algorithm is
based on the observation that in this random process, it is very
likely that there is exactly one extreme sign change for the threshold
$\alpha = 1$ (i.e., there do not exist two disjoint intervals that
both contain extreme sign changes).  This is stated in Theorem
\ref{sign_change_thm}.  If this random process has exactly one extreme
sign change, then we say that the process {\em first reaches a
  threshold $\alpha$ at time $t$} if there is an interval $[t',t]$
such that $v^i_t \cdot r \geq \alpha$, $v^i_{t'} \cdot r \leq
-\alpha$, and $v^i_{q} \cdot r < \alpha$ for all $q \in[t',t]$.  Note
that these intervals are taken modulo $s$ (i.e., they are intervals on
a circle).

\begin{definition}
An {\em extreme sign change} with threshold $\alpha$ in the sequence
$\{v_i^0 \cdot r, v_i^1 \cdot r, \dots, v_i^{s-1} \cdot r\}$ occurs
when $v^i_{t_1} \cdot r \leq -\alpha$ and $v^i_{t_2} \cdot r \geq \alpha$ and for no
value of $t: t_1 < t < t_2$ is $v^i_t \cdot r \geq \alpha$.
\end{definition}

\begin{theorem}\label{sign_change_thm}
  If $s$ is a sufficiently large constant, then with probability at
  least $.96$, the random process $\{v_i^k \cdot r\}$ with $s$ steps
  has exactly one extreme sign change with threshold 1.
\end{theorem}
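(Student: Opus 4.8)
The plan is to identify the sequence $\{v_i^k\cdot r\}_{k=0}^{s-1}$ with a sampled path of a Brownian motion on a circle and then bound the probability of two disjoint extreme sign changes via the reflection principle. First I would use Lemma~\ref{same_constell} to reduce to the canonical constellation $C_0$: since $C_i$ is $C_0$ up to rotation, reflection, and translation, the distribution of the sequence $\{v_i^k\cdot r\}$ is, after a relabeling of time indices, the same for every $x_i$, so it suffices to analyze $C_0$. For $C_0$ the increments $v_{ik}=(v_0^k-v_0^{k-1})/2$ are, by Lemmas~\ref{lemm_length} and \ref{lemm_overlap}, orthogonal of common length $\sqrt{2}/\sqrt{p}$ (with the appropriate $s$ in place of $p$); hence $r\cdot v_0^k$ is a sum of independent mean-zero Gaussian increments of equal variance, i.e.\ a scaled simple Gaussian random walk on $k=0,\dots,s/2$, which is reflected/negated to run around the circle $k=0,\dots,s-1$ with $r\cdot v_0^{k+s/2}=-r\cdot v_0^k$. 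As $s\to\infty$, after the standard diffusive rescaling this walk converges to a Brownian bridge-type object on the circle of total time-length $2$ pinned so that $W(t+1)=-W(t)$; I would make this limit rigorous (or simply prove the needed inequality directly for the discrete walk with an $O(1/\sqrt{s})$ correction, which is cleaner and avoids invoking Donsker).

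The core probabilistic estimate is: an extreme sign change with threshold $1$ in an interval $[t,t']$ requires the walk to go from $\le-1$ up to $\ge+1$ while staying $<1$ in between — so in particular the path must traverse an additive gap of at least $2$ (from below $-1$ to above $+1$). Two \emph{disjoint} such intervals on the circle would force the walk to make two such up-crossings of width $2$ in disjoint time windows, and since the total time is the fixed constant $2$ (in rescaled units) this is a large-deviation event for Brownian motion. Concretely I would: (i) show that between the two disjoint up-crossing intervals there must also be a down-crossing (to return from $\ge 1$ back to $\le -1$), so the path oscillates with total variation at least $\approx 6$ over total time $2$; (ii) bound the probability that Brownian motion run for time $2$ has range (or appropriate crossing count) that large using the reflection principle / maximal inequality, e.g.\ $\Pr[\max_{[0,2]}W - \min_{[0,2]}W \ge c]\le 4\Pr[W(2)\ge c/2] = O(e^{-c^2/16})$; and (iii) also handle the event of \emph{zero} extreme sign changes, which happens only if the walk never reaches $-1$ followed by $+1$ at all, again an explicit Gaussian tail. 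Choosing the constant threshold $\alpha=1$ against the fixed total time $2$ is exactly what makes these tail bounds sum to less than $.04$. I would then add an $O(1/\sqrt{s})$ slack for the discretization and for the definition of ``in the interval $[t,t']$'' given in the text, which vanishes for $s$ a sufficiently large constant.

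The main obstacle I expect is step (i)–(ii): carefully translating the combinatorial statement ``there do not exist two disjoint intervals each containing an extreme sign change'' into a clean single event about the Brownian path on the circle, and getting the constants right so the failure probability is genuinely below $.04$ rather than merely ``small.'' In particular one must be careful that the circle has the antipodal identification $W(t+1)=-W(t)$, so a ``disjoint pair'' of up-crossings is constrained — a threshold crossing on one semicircle is mirrored on the other — and this correlation can only help, but it needs to be argued rather than asserted. A secondary nuisance is that the discrete process is only defined at $s$ points and the paper's notion of a time step lying ``in'' an interval uses the circular metric $d(\cdot,\cdot)$; I would absorb these into the $o(1)$ error term and verify that the estimate degrades gracefully. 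The rest — the identification with a Gaussian walk, the variance computation, and the reflection-principle tail bounds — is routine given the lemmas already proved.
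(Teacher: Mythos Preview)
Your skeleton matches the paper's: reduce via Lemma~\ref{same_constell} to the canonical constellation, identify $\{v_i^k\cdot r\}$ with a Gaussian walk whose continuous limit is Brownian motion, bound the bad events with the reflection principle, and then handle the discretization. But the key probabilistic step differs, and your version has a real gap.

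The paper does \emph{not} bound the probability of multiple disjoint up-crossings via a range or maximal inequality. It writes $v_i^k\cdot r = a - 2w_k$ with $a = v_i^0\cdot r$ and $w_k$ a Gaussian random walk converging to a Brownian motion $W_t$ on $[0,1]$; the events $\{v_i^k\cdot r \ge 1\}$ and $\{v_i^k\cdot r \le -1\}$ become $\{W_t \le a/2 - 1/2\}$ and $\{W_t \ge a/2 + 1/2\}$. The antipodal relation $v_i^{k+s/2}=-v_i^k$ forces the number of extreme sign changes on the circle to be odd whenever it is nonzero, so only two estimates are needed: (a) $\Pr[W \text{ hits at least one barrier}]\ge .9856$ (Lemma~\ref{lem:one-change}), and (b) $\Pr[W \text{ makes three alternating barrier hits}]\le .0178$ (Lemma~\ref{lem:three-changes}). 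Both are obtained by \emph{conditioning on} $W_1=a$, applying the reflection principle one, two, or three times to get explicit densities such as $\phi(1)/\phi(a)$, $\phi(a+2)/\phi(a)$, $\phi(3)/\phi(a)$, and then integrating in $a$ with a case split on $|a|<1$ versus $|a|\ge 1$. The constant $.96$ is the difference of these explicit numbers, not a soft tail bound.

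Your proposed estimate $\Pr[\max_{[0,2]}W-\min_{[0,2]}W\ge c]\le 4\Pr[W(2)\ge c/2]$ does not control the event you need. Multiple crossings of the band $[-1,1]$ do not force the range to exceed $2$, and ``total variation $\approx 6$'' is not a usable constraint on a Brownian path, which has infinite variation almost surely. Even taken at face value with $c=6$ over time $2$, the bound is roughly $4\Pr[\N(0,1)\ge 3/\sqrt{2}]\approx .068$, already larger than the $.04$ budget before adding the zero-crossing case; and the process on the circle is not a free Brownian motion on $[0,2]$ in the first place, so the inequality does not even apply as stated. To get below $.04$ you genuinely need the iterated-reflection computation conditioned on the endpoint $W_1=a$, as in the paper. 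Your discretization plan is in the right spirit; the paper makes it precise by shifting the barrier by a small $\eta$ and using a Brownian-bridge increment bound (Lemmas~\ref{lem:one-change-eta} and~\ref{lem:discretize-2}), noting that the ``at most three crossings'' direction passes to the discrete walk for free.
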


\begin{figure}[t]
\begin{center}
  \epsfig{file=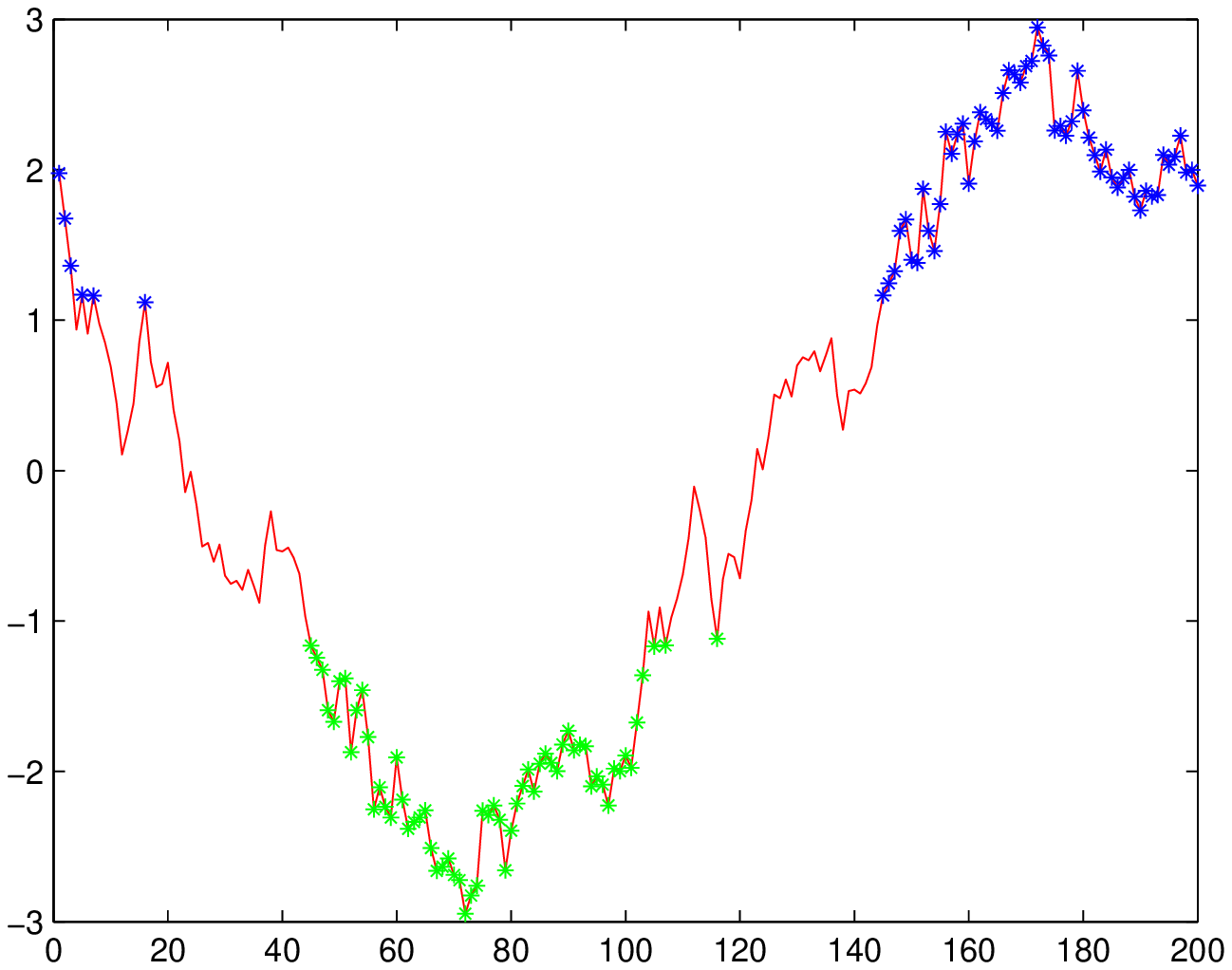, width=5.4cm} \epsfig{file=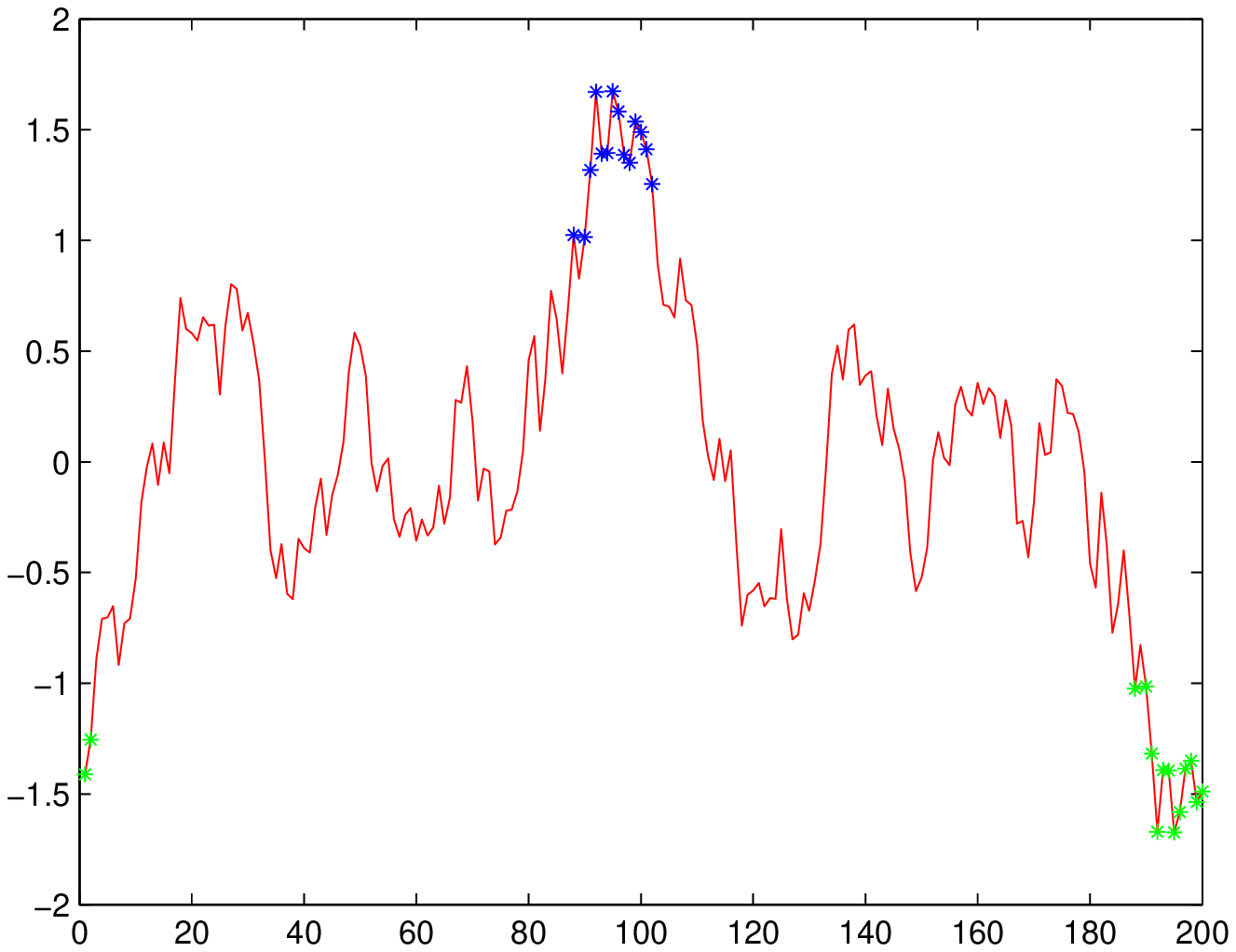,
    width=5.4cm} \epsfig{file=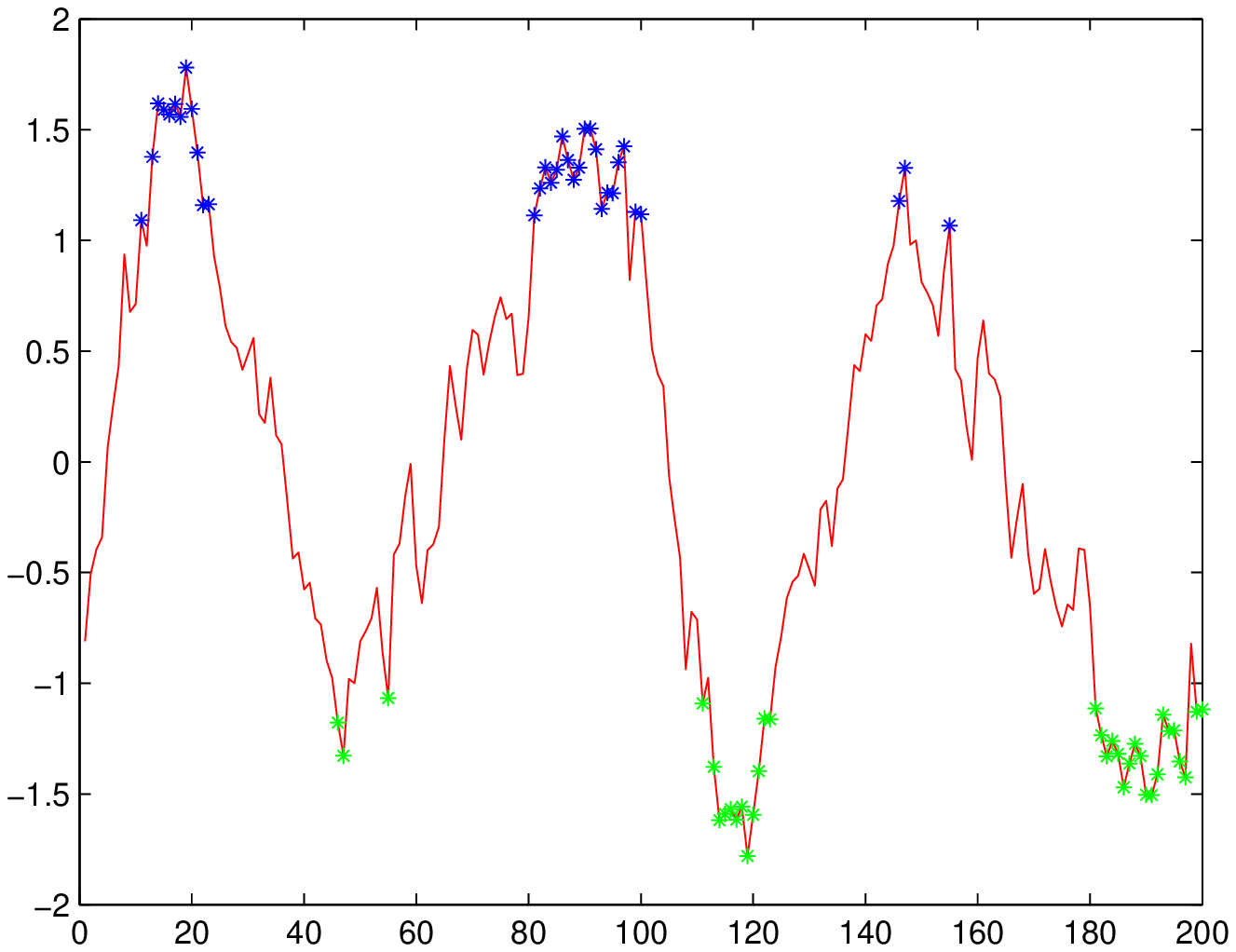, width=5.4cm}
  {\caption{The first two figures depict instances with one extreme
      sign change for threshold 1.  The third figure shows an instance
      with three extreme sign changes for threshold 1.  The blue dots
      represent the values greater than 1 and the green dots represent
      the values less than -1.
\label{signs}}}
\end{center}
\end{figure}

\subsection{Rounding Algorithm}

Given Theorem \ref{sign_change_thm}, we present the following
rounding algorithm.

\vspace{4mm}

\noindent
\fbox{\parbox{16.5cm}{
\begin{itemize}

\item[(i)] Solve the semidefinite relaxation ($P$).

\item[(ii)] Choose a random vector $r \in {\mathbbm{R}}^{sn}$ where
  each coordinate $r_i \sim {\cal N}(0,1)$ for $i \in \{1,\dots, sn\}$.  

\item[(iii)] {For each variable $x_i \in X$, consider the sequence
    $\{v_i^k \cdot r\}$ for all $k \in \Se$.

\begin{itemize}

\item[(a)] {If there is one extreme sign change:
\begin{itemize}
\item[] Place $x_i$ at the $k$ where $\{v_i^k \cdot r\}$ first reaches
  threshold 1.
\end{itemize}}

\item[(b)] {If there are no extreme sign changes:

\begin{itemize}
\item[] Assign
  $x_i$ a random position in $[0,s-1]$.
\end{itemize}
}

\end{itemize}}
\end{itemize}}}

\vspace{3mm}

For each $x_i \in X$, we associate the random walk $w^i$.  Each walk
$w^i$ has the same expected behaviour.  This follows from the fact
that for each $i$, there is a rotation matrix such that the set of
vectors $\{v_i^k\}$ is equal to a canonical set of vectors, as stated
in Lemma \ref{same_constell} (i.e., for a fixed vertex $i$, the
pairwise (setwise) relationships of the vectors is exactly prescribed
by constraints (2) and (4) of the SDP).  We prove Theorem
\ref{sign_change_thm} in Section \ref{sec:sign_change}.  First, we
briefly discuss {\em Brownian motion}, which we will use in the proof
of Theorem \ref{sign_change_thm}.

To measure the quality of the solution produced by this rounding
algorithm, one must analyze the correlation of two random walks.  In Section
\ref{sec:correlated}, we state a conjecture regarding this correlation,
which is supported by extensive computational experiments.

\section{Brownian Motion}\label{sec:brownian}

In order to analyze the randomized rounding scheme presented above, we
will interpret the sequence of vectors corresponding to any fixed
variable, $v_1\cdot r, v_2\cdot r,\ldots, v_s\cdot r$ as a random
walk.  We will show that this random walk is a discrete sampling of a
fundamental continuous stochastic process called \concept{Brownian
  motion}.  We will then use properties of Brownian motion to prove
properties of our discrete walk.  For background in Brownian motion,
we refer the reader to the textbook \cite{Brownian}.

A stochastic process $W_t$ for $0\le t<\infty$
is a Brownian motion if it satisfies the following
properties:  
\begin{enumerate}
\item For times $t_1<t_2$, $W_{t_2}-W_{t_1}\sim \N(0, t_2-t_1)$.
\item For all choices of times $0\le t_0<t_1<\ldots< t_n<\infty$,
$W_{t_{i+1}}-W_{t_i}$ is independent of $W_{t_{j+1}}-W_{t_j}$
for all choices of $i, j, n$.  
\item $W_0 = 0$. 
\item $W_t$ has continuous sample paths with probability 1.  
\end{enumerate}

Our proofs will rely on two basic properties of Brownian motion:
the \concept{distributions
of hitting times} and the \concept{Reflection Principle}.

The \concept{first hitting time} for level $b$, denoted by $\tau_b$, 
is defined to be the first time at which $W_t$ takes
the value $b$:  $\tau_b = \inf\set{t:W_t = b}$.  
$\tau_b$ is a random variable whose distribution has
the density function:
\begin{equation}
\Pr[\tau_b \in dt] = \frac{b}{\sqrt{2\pi}t^{3/2}}
\exp\left(-\frac{b^2}{2t}\right)dt.
\end{equation}

Roughly stated, the Reflection Principle is the intuitive 
property that once a Brownian motion hits a level $b$, it
is equally likely to be above and below the level $b$ in the future.
More precisely, it states that if $W_t$ is a Brownian motion
and $\tau_b$ its hitting time to $b$, then
the process $W'_t$ defined by 
\begin{equation*}
W'_t=\left\{\begin{array}{ll}
  W_t & \mbox{for $0\le t\le \tau_b$}\\
  2b-W_t & \mbox{for $t\ge \tau_b$}
  \end{array}
\right.
\end{equation*}
(which is the formula for $W_t$ ``reflected''
about the horizontal line $y=b$ after it hits $b$) is also a Brownian motion.  
We will use the reflection principle many times in our calculations.

\subsection{Mapping Our Process to Brownian Motion}

Given a constellation of vectors $C_i$, we can assume by Lemma
\ref{same_constell} that the vectors have the following
configuration.  Each vector in this configuration has
dimension $s/2$.   Note that in order to make each vector a unit
vector, we can multiply each entry by $\frac{\sqrt{2}}{\sqrt{s}}$.
\begin{center}
\begin{tabular}{cccccc}
$v_i^{0'}$ & $v_i^{1'}$ & $v_i^{2'}$ & $v_i^{3'}$ & $\dots$ & $v_i^{{s-1}'}$ \\
1 & -1 & -1 & -1 & \dots & 1 \\
1 & 1 & -1 & -1 & \dots & 1 \\
1 & 1 & 1 & -1 & \dots & 1 \\
1 & 1 & 1 & 1 & \dots & 1 \\
\dots & \dots & \dots & \dots & \dots & \dots
\end{tabular}
\end{center}
Suppose that $r \in \N(0,1)^{s/2}$ is a vector such that $r = (r_1,
r_2, \dots r_{s/2})$.  Let 
$a' = \sum_{i=1}^{s/2} r_i = v_i^{0'} \cdot r$.
Define the 
process $w_k^{i'}$ as $w_k^{i'} = \sum_{i=0}^{k} r_k$.  Thus, we have that: 
\begin{eqnarray*}
a' - 2 w_k^{i'}  = v^{k'} \cdot r, \quad w_k^{i'} = \frac{a' - v^{k'} \cdot r}{2}.
\end{eqnarray*} 
Let the vector $v^k$ represent the vector $v^{k'}$ with each entry
multiplied by $\frac{\sqrt{2}}{\sqrt{s}}$, so that $\{v^k\}$ is a set
of unit vectors.  Let $a = \sum_{i=1}^{s/2} r_i \cdot
\frac{\sqrt{2}}{\sqrt{s}} = v^{0} \cdot r = (v^{0'} \cdot r) \cdot
\frac{\sqrt{2}}{\sqrt{s}}$, and define the process $w_k^i$ as
$w_k^i = \sum_{i=0}^{k} r_k \cdot \frac{\sqrt{2}}{\sqrt{s}}$.  Thus, we have that: 
\begin{eqnarray*}
a - 2 w_k^i  = v^{k} \cdot r, \quad w_k^i = \frac{a - v^{k} \cdot r}{2}.
\end{eqnarray*} 
Note that when $v^k \cdot r = 1$ or
$-1$, it is the case that $w_k^i = \frac{a-1}{2}$ and $\frac{a+1}{2}$,
respectively.  If we define $W_t$ to be a Brownian motion on the
continuous interval $[0,1]$, then 
$w_t$ is a discretization of this continuous process.  In the
remainder of the paper, when we refer to a particular process $w_k^i$,
we will drop the superscript $i$ when it is clear from the context, or
when we are just referring to a single process.  

\section{Probability of Exactly One Extreme Sign Change}\label{sec:sign_change}

In this section, we
we prove Theorem \ref{sign_change_thm}.  
We adopt standard statistical notation and denote
\concept{density function} of a continuous random
variable $X$ by $\Pr[X\in dx]$.  For example, if $X\sim \N(0, 1)$,
then $\Pr[X\in dx] = \frac{1}{\sqrt{2\pi}}e^{-x^2/2}dx$.
(Heuristically, $dx$ denotes a very small region around the value
$x$.)  We will furthermore write $\Pr[X\in dx(1)]$ to denote the
density function of $X$ when $x$ takes the specific value of 1.

We compute the probability of an event $A$ conditioned on a random
variable $X$ taking value $X=x$ by applying the formula:
\begin{equation*}
\Pr[A|X = x] = \frac{\Pr[A \mbox{ and } X\in dx]}{\Pr[X\in dx]}.
\end{equation*}

\subsection{Probability of at Least One Sign
  Change}\label{sec:one-change} Suppose Brownian motion $W_t$ begins
at $W_0 = 0$ and at $t = 1$ satisfies $W_1 = a$.  Let $H^+$ be the
minimum time that $W$ reaches the threshold $a/2+1/2$, and $H^-$ be
the minimum time that $W$ reaches $a/2-1/2$.  Note that
$H^+=\tau_{W_1/2+1/2}$ and that $H^-=\tau_{W_1/2-1/2}$.  These hitting
times depend on the value of $W_t$ at time $1$, and are therefore are
not stopping times.  In order to calculate their distributions, we
will first fix $a$ and then calculate the distributions of $H^+$ and
$H^-$,
\textit{conditioned} on the value of $W_1=a$.  

For our proof, we will find it helpful to generalize our definition of
$H^-$ and $H^+$ as follows.  Suppose a Brownian motion satisfies $W_1
= a$.  Define $H^+_{i}$ to be the first time that $W_t$ finishes
hitting the following sequence of barriers: $a/2+1/2$, then $a/2-1/2$,
then $a/2+1/2$ and so forth until it has crossed $i$ barriers,
alternating between the upper and lower barriers.  As an example,
$H^+_3$ would be the first time that the path hits
$a/2+1/2$ \textit{after} having first hit $a/2+1/2$, and then
$a/2-1/2$.  Similarly, define $H^-_i$ to be the first time that $W_t$
finishes hitting the sequence of barriers: below $a/2-1/2$, above
$a/2+1/2$, and so forth until it has crossed $i$ barriers.

\begin{lemma}\label{lem:one-change}
$\Pr[H^+\le 1 \mbox{ or } H^-\le 1]\ge .985612$.
\end{lemma}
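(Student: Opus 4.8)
The plan is to condition on the endpoint $W_1=a$, reduce the event to a confinement question for a Brownian bridge, evaluate it by repeated use of the Reflection Principle (which is precisely the purpose of the auxiliary hitting times $H^\pm_i$), and then average over $a\sim\N(0,1)$. Note first that the claim is trivial unless $|a|<1$: when $|a|\ge 1$ the barriers $b^+:=a/2+1/2$ and $b^-:=a/2-1/2$ no longer bracket both of $W_0=0$ and $W_1=a$, so the intermediate value theorem forces $W$ to hit one of them, giving conditional probability $1$. Hence the only contribution to the complement comes from $|a|<1$, where $0$ and $a$ lie strictly inside the width-one strip $(b^-,b^+)$, and ``$H^+\le1$ or $H^-\le1$'' fails exactly when the bridge from $0$ to $a$ stays in that strip throughout $[0,1]$.

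For $|a|<1$ I would then expand the conditional probability of the union by the inclusion-exclusion identity over the chain of first passages,
\[
\Pr[H^+\le1\text{ or }H^-\le1\mid a]=\sum_{i\ge1}(-1)^{i+1}\big(\Pr[H^+_i\le1\mid a]+\Pr[H^-_i\le1\mid a]\big),
\]
and compute each conditional hitting probability by reflecting the \emph{unconditioned} Brownian motion $i-1$ times, at the successive first passages to the reflected barriers. Each reflection pushes the ``next'' barrier one strip-width higher and maps the endpoint to its mirror image, so the $i$-barrier event collapses to a single level-crossing event; dividing the reflected Gaussian endpoint density by the $\N(0,1)$ density at $a$ then yields a clean closed form for each term (for $i$ odd it is $e^{(a^2-i^2)/2}$, with an analogous expression for $i$ even, and the $H^-_i$ versions obtained by $a\mapsto-a$). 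Equivalently, this is the method of images for a bridge confined to a strip; the super-exponential decay of the terms gives absolute convergence and rigorous two-sided bounds from successive partial sums.

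Finally I would integrate against the $\N(0,1)$ density over $a\in(-1,1)$, the region $|a|\ge1$ contributing its full mass. The Gaussian weight exactly cancels the factor $e^{a^2/2}$ appearing in the odd-$i$ terms, while the even-$i$ terms integrate to differences of standard normal tail probabilities that telescope and cancel the mass contributed by $|a|\ge1$. What survives is the closed form
\[
\Pr[H^+\le1\text{ or }H^-\le1]=\frac{4}{\sqrt{2\pi}}\sum_{k=0}^{\infty}e^{-(2k+1)^2/2},
\]
and since every term is positive it suffices to keep the first few: $\tfrac{4}{\sqrt{2\pi}}\big(e^{-1/2}+e^{-9/2}+e^{-25/2}\big)\approx.98562>.985612$, the tail being negligible.

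The main obstacle is the reflection bookkeeping in the middle step. Because the barriers $a/2\pm1/2$ depend on the endpoint, one must condition on $W_1=a$ before any barrier is even defined; and one must then resist the tempting shortcut of treating a reflected bridge as a bridge with a new endpoint, which gives \emph{incorrect} conditional hitting probabilities — the reflections have to be carried out on the unconditioned motion, with the division by the Gaussian density performed only at the end. Once the conditional hitting probabilities are pinned down, the averaging, the telescoping that produces the closed form, and the numerical check are all routine.
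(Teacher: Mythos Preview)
Your proposal is correct and follows essentially the same approach as the paper: condition on $W_1=a$, dispose of $|a|\ge1$ by continuity, and for $|a|<1$ unfold the inclusion--exclusion over the $H^\pm_i$ via repeated reflections, integrating each term against $\phi(a)$. The only difference is cosmetic: the paper truncates the alternating sum after a few levels and bounds the remainder numerically using the three-crossing estimates, whereas you carry the series to completion, notice that the even-$i$ contributions telescope against the $|a|\ge1$ mass, and obtain the closed form $\tfrac{4}{\sqrt{2\pi}}\sum_{k\ge0}e^{-(2k+1)^2/2}$ before truncating.
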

Our approach will be to consider the decomposition
of the total probability into probabilities conditioned
on $W_1=a$:
\begin{equation*}
\Pr[H^+ \le 1 \mbox{ or } H^-\le 1]=
\int_{-\infty}^{\infty}\Pr[H^+\le 1 \mbox{ or } H^-\le 1~|~W_1 = a]\phi(a)da.
\end{equation*}
and calculate the integral on the right-hand-side.  

We break the domain of the integral into three parts.

\subsubsection{Case (i): $a\ge1$}
In this case, the upper barrier is $a/2+1/2 \le a$.  
The condition $W_1=a$ implies that 
$W_t$ crosses this barrier with probability 1 (i.e. $H^+\le 1$).  Therefore,
\begin{equation*}
\int_{1}^{\infty}\Pr[H^+\le 1 ~|~W_1 = a]\phi(a)da = 
\int_{1}^{\infty}\phi(a)da \ge .158655.
\end{equation*}

\subsubsection{Case (ii): $a\le -1$}
Analogous to Case (i).  

\subsubsection{Case (iii):  $-1< a< 1$}
From an application of the 
inclusion-exclusion principle, note that:
\begin{eqnarray*}
\Pr\left[H^+\le 1 \mbox{ or } H^{-}\le 1~|~W_1 = a\right] &=& 
\Pr[H^+\le 1~|~W_1 = a]+\Pr[H^-\le 1~|~W_1 = a]\\ & &
-\Pr[H^+\le 1 \mbox{ and } H^-\le 1~|~W_1 = a].
\end{eqnarray*}

\subsubsection{An example of applying the reflection
principle} We now sketch the reasoning behind a standard calculation
involving the reflection principle of Brownian motion and apply it to
calculating $\Pr[H^+\le 1~|~W_1 = a]$ for the case $-1<a<1$.  We will
use this sort of calculation many times in our proofs.

Fix a value of $a$.  Suppose a Brownian motion $W_t$ (but not
restricted to satisfy $W_1 = a$) hits the value $b = a/2+1/2$ at time
$\tau_b$ (i.e., $W_{\tau_b} = b$).  Define $W'_t$ to be the process
$W'_t = W_t$ for $0\le t\le \tau_b$ and $W'_t= 2b-W_t$ for
$t\ge \tau_b$.  By the reflection principle, the random process $W'_t$
is also a Brownian motion.

Now consider the subset of Brownian motions $\set{W_t|H^+\le 1, W_1 =
a}$ (i.e., they satisfy $\tau_{a/1+1/2}\le 1, W_1 = a$).  Note that
these processes correspond exactly to reflected processes $W'$ that
satisfy $W'_1 = 2b-W_1 = 1$.  Thus, the elements in the set
$\set{W_t|H^+\le 1, W_1 = a}$ correspond exactly to elements in the
set $\set{W'_t|W'_1 = 1}$.
Then  
\begin{eqnarray*}
\Pr[H^+\le 1~|~W_1 = a] = 
\frac{\Pr[W'_1 \in dx(1)]}{\Pr[W_1 \in da]}
=\frac{\phi(1)}{\phi(a)}.
\end{eqnarray*}
For a more rigorous justification of these calculations, see \cite{jChang}.

Similarly, one can show that:
$\Pr[H^-\le 1~|~W_1 = a] = \phi(1)/\phi(a)$.  
Therefore
\begin{equation*}
\int_{-1}^{1}\Pr[H^+\le 1~|~W_1 = a]\phi(a) da = 
\int_{-1}^{1}\Pr[H^-\le 1~|~W_1 = a]\phi(a) da = 
\int_{-1}^1\phi(1)da \ge .483941.
\end{equation*}

\subsubsection{An example applying the reflection principle twice}\label{sec:twice-reflect}
Note that the event $\set{W_t|H^+\le 1 \mbox{ and } H^-\le 1, W_1 =
a}$ corresponds to processes that either cross above the barrier
$a/2+1/2$ then below $a/2-1/2$ or vice versa, i.e. processes that
satisfy $H^+_2\le 1$ or $H^-_2\le 1$.  Therefore, from the
inclusion-exclusion principle we have:
\begin{eqnarray*}
\Pr[H^+\le 1 \mbox{ and } H^-\le 1~|~W_1 = a] &=& \Pr[H^+_2\le 1~|~W_1 = a]+
\Pr[H^-_2\le 1~|~W_1 = a] \\ & & -
\Pr[H^+_2\le 1 \mbox{ and } H^-_2\le 1~|~W_1 = a].
\end{eqnarray*}

In order to calculate $\Pr[H^+_2\le 1~|~W_1 = a]$, we apply the
reflection principle twice.  First, we reflect
$W_t$ about the line $a/2+1/2$ when it first hits $a/2+1/2$.  
Call this reflected process $W'_t$.  A process $W_t$
that hits $a/2+1/2$, then hits $a/2-1/2$, then satisfies $W_1 = a$
will correspond exactly to a reflected 
process $W'_t$ that first hits $a/2+1/2$ then 
hits $a/2+3/2$ then achieves $W'_1 = 1$.  

Next, we reflect the process $W'_t$ the first time it hits $a/2+3/2$
about the line $a/2+3/2$; call this new process $W''_t$.  It is easy
to verify that a 
process 
$W_t$ (prior to reflection) that hits $a/2+1/2$, then $a/2-1/2$, then
achieves $W_1 = a$ will correspond exactly to a reflected process $W''_t$ that
satisfies $W''_1 = 2+a$.  Therefore,
\begin{eqnarray*}
\Pr[H^+_2\le 1~|~W_1 = a] &=& \frac{\Pr[H^+_2\le 1\mbox{ and }W_1\in da]}
{\Pr[W_1\in da]}\\ &=& 
\frac{\Pr[W''_1 \in dx(2a+1)]}{\Pr[W_1 \in da]}\\ 
&=&\frac{\phi(2+a)}{\phi(a)}.
\end{eqnarray*}

\begin{comment}
In order to calculate $\Pr[H^+_2\le 1~|~W_1 = a]$, we apply the
reflection principle twice.  After the process $W_t$
$W_t$ first hits $a/2+1/2$ and then hits $a/2-1/2$, 
we reflect the process $W_t$
about the line $a/2-1/2$.  
Call this reflected process $W'_t$.  A process $W_t$
that hits $a/2+1/2$, then hits $a/2-1/2$, then achieves $W_1 = a$
will be reflected to a process $W'_t$ that first hits $a/2+1/2$ and then 
achieves $W'_1 = -1$.  
Next, we reflect the process $W'_t$ the first time it hits $a/2+1/2$
about the line $a/2+1/2$; call this new process $W''_t$.  It is easy
to verify that a 
process $W_t$ (prior to reflection) that hits $a/2+1/2$, then $a/2-1/2$, then
achieves $W_1 = a$ will correspond exactly to a twice reflected 
process $W''_t$ that
achieves $W''_1 = 2+a$.  
Therefore,
\begin{eqnarray*}
\Pr[H^+_2\le 1~|~W_1 = a] &=& \frac{\Pr[H^+_2\le 1\mbox{ and }W_1\in da]}
{\Pr[W_1\in da]}\\ &=& 
\frac{\Pr[W''_1 \in dx(2a+1)]}{\Pr[W_1 \in da]}\\ 
&=&\frac{\phi(2+a)}{\phi(a)}.
\end{eqnarray*}
\end{comment}

Therefore,
\begin{equation*}
\int_{-1}^1\Pr[H^+_2\le 1~|~W_1 = a]\phi(a)da = \int_{-1}^{1}\phi(2+a)da
\le.157305.
\end{equation*}
Similarly, one can show
\begin{equation*}
\int_{-1}^1\Pr[H^-_2\le 1~|~W_1 = a]\phi(a)da = \int_{-1}^{1}\phi(2+a)da \le
.157305.
\end{equation*}

Note that the event
$[H^+_2\le 1 \mbox{ and } H^-_2\le 1~|~W_1 = a]$ corresponds to 
the event $[H^+_3\le 1 \mbox{ or } H^+_3\le 1]$.
From the calcuations in Section~\ref{sec:3-barrier-case-3}, the following bound
can be easily derived:
\begin{equation*}
\int_{-1}^1 \Pr[H^+_3\le 1 \mbox{ or } H^-_3\le 1~|~W_1 = a]\phi(a)da
\ge .01503.   
\end{equation*}

Combining these calculations, we arrive at:
\begin{equation*}
\int_{-1}^1 \Pr[H^-_1\le 1 \mbox { or } H^+_1\le 1~|~W_1 = a]\phi(a)da
\ge .668302
\end{equation*}
\subsection{Totals}
Combining the results of the three cases, we obtain:
\begin{equation*}\Pr[H^+\le 1 \mbox{ or } H^-\le 1]\ge 
.158655\cdot 2+.668302 = .985612.
\end{equation*}  

\subsection{Probability of Three or More Sign Changes}\label{sec:three-changes}
In this section, we prove the following Lemma:
\begin{lemma}\label{lem:three-changes}
$\Pr[H_3^+\le 1 \mbox{ or }H_3^-\le 1]\le .0178.$
\end{lemma}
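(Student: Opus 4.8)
The plan is to mimic exactly the pattern already established in Section \ref{sec:one-change}, but now with three barriers instead of one. By the inclusion--exclusion argument used there, I would write
\begin{equation*}
\Pr[H_3^+\le 1 \mbox{ or } H_3^-\le 1] = \int_{-\infty}^{\infty}\Pr[H_3^+\le 1 \mbox{ or } H_3^-\le 1~|~W_1 = a]\phi(a)\,da,
\end{equation*}
and bound the integrand separately on the three ranges $a\ge 1$, $a\le-1$, and $-1<a<1$. Since $H_3^+$ requires crossing three alternating barriers and $H_3^+ \le H_3^- $-type events are now rather restrictive, I expect the whole integral to be small, as the claimed bound $.0178$ suggests. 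The key tool is again the reflection principle, applied three times: a Brownian path that hits $a/2+1/2$, then $a/2-1/2$, then $a/2+1/2$, and finishes at $W_1=a$ corresponds, after reflecting successively about $a/2+1/2$, then $a/2+3/2$, then $a/2+1/2$ (shifted appropriately), to an unconstrained path finishing at a shifted endpoint. Tracking the arithmetic of the reflections, I would expect $\Pr[H_3^+\le 1~|~W_1=a] = \phi(c+a)/\phi(a)$ for the appropriate constant $c$ (the cumulative barrier displacement, which for three unit-gap barriers should be $c = 2$ again for $H_3^+$ when the net endpoint shift works out, or $c=4$ depending on parity — this is exactly the bookkeeping that must be done carefully), so that $\int_{-1}^1 \Pr[H_3^+\le 1~|~W_1=a]\phi(a)\,da = \int_{-1}^1 \phi(c+a)\,da$, a concrete Gaussian integral.

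Concretely, the steps I would carry out are: (1) For $a\ge 1$: here the condition $W_1=a$ already forces the first barrier $a/2+1/2\le a$ to be crossed, but the event $H_3^+\le 1$ still requires two further (alternating) crossings within time $1$; I would bound this conditional probability by the probability that a Brownian bridge from $0$ to $a$ on $[0,1]$ also dips below $a/2-1/2$ and comes back above $a/2+1/2$, which is small and can be handled by one or two reflections, then integrate against $\phi(a)$ over $[1,\infty)$. (2) For $a\le-1$: symmetric to (1) with the roles of $H_3^+$ and $H_3^-$ swapped. (3) For $-1<a<1$: apply inclusion--exclusion as in Section \ref{sec:twice-reflect}, namely $\Pr[H_3^+\le1 \mbox{ or } H_3^-\le1 ~|~ W_1=a] \le \Pr[H_3^+\le1~|~W_1=a] + \Pr[H_3^-\le1~|~W_1=a]$, and evaluate each term via triple reflection as $\int_{-1}^1\phi(c+a)\,da$ for the relevant constant. (4) Sum the three contributions and check numerically that the total is at most $.0178$.

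The main obstacle is step (1), the tail range $a\ge 1$ (and its mirror): there the endpoint condition already "uses up" one barrier crossing for free, so the clean "reflect and read off $\phi(\text{shift}+a)$'' identity from the $-1<a<1$ case does not apply directly, and one must instead bound the conditional probability that a Brownian bridge makes the remaining two alternating crossings. This requires either a separate reflection-principle computation for the bridge or a crude-but-sufficient upper bound (e.g. bounding the probability of the intermediate down-crossing of $a/2-1/2$ by a bridge that ends at $a\ge1$, which decays in $a$); I would expect this contribution to be genuinely tiny, so even a lossy bound suffices, but it is the part that needs care rather than a mechanical copy of the earlier argument. A secondary nuisance is getting the reflection bookkeeping exactly right — the sequence of lines to reflect about, and the resulting endpoint shift — since an off-by-one there changes the constant $c$ and hence whether the final numeric bound $.0178$ comes out; I would double-check it against the already-used fact that $\int_{-1}^1\phi(2+a)\,da\le .157305$ governs the two-barrier case and verify that the three-barrier shift is consistent with the pattern. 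Once the reflection identities are pinned down, steps (3) and (4) are routine Gaussian quadrature.
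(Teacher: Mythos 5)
Your overall architecture matches the paper's proof: condition on $W_1=a$, split into the ranges $a\ge 1$, $a\le -1$, $-1<a<1$, and evaluate the conditional probabilities by iterating the reflection principle. Two of your worries are actually non-issues. For $a\ge 1$ the paper observes that $H_3^+\le 1$ implies $H_3^-\le 1$ (the path starts at $0\le a/2-1/2$ and ends above $a/2+1/2$), so only one term is needed, and that term is handled by exactly the same clean double-reflection identity as before, giving $\Pr[H_3^-\le 1\,|\,W_1=a]=\phi(a+2)/\phi(a)$ and a tail contribution $\int_1^\infty\phi(a+2)\,da\le .00135$ per side; no separate bridge estimate is required. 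And the bookkeeping for the middle range: three reflections about $a/2+1/2$, $a/2+3/2$, $a/2+5/2$ send the endpoint $a$ to the \emph{constant} $3$, so $\Pr[H_3^+\le 1\,|\,W_1=a]=\phi(3)/\phi(a)$, not $\phi(c+a)/\phi(a)$; each of the two terms integrates to $2\phi(3)\le .00887$.

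The genuine gap is in your step (3). What you write there is the union bound, not inclusion--exclusion, and the union bound does not close the argument: summing your three contributions gives $2(.00135)+2(.00887)\approx .0204$, which exceeds the claimed $.0178$. The paper must, and does, subtract the intersection term $\Pr[H_3^+\le 1\mbox{ and } H_3^-\le 1\,|\,W_1=a]$ on $-1<a<1$. This requires an additional idea you have not supplied: the intersection event is exactly the event $\{H_4^+\le 1\mbox{ or } H_4^-\le 1\}$, so one applies inclusion--exclusion once more at the four-barrier level (with the five-barrier terms controlling the new intersection), obtaining a lower bound of about $.0026$ for the subtracted term via $2\int_{-1}^1\phi(4+a)\,da$. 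Only after this subtraction does the middle-range contribution drop to about $.0150$ and the total to $.01774\le .0178$. Without this step your proof establishes a weaker bound than the lemma states.
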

Since the barriers in $H^+_3$ and $H^-_3$
depend on the value of $W_1$, 
as in the previous section, it will be necessary to
decompose the total probability into probabilities
conditioned on $a=W_1$:
\begin{equation*}
\Pr[H_3^+\le 1 \mbox{ or } H_3^-\le 1] = 
\int_{-\infty}^{\infty} \Pr[H_3^+\le 1 \mbox { or } 
H_3^-\le 1~|~W_1 = a]\phi(a)da.
\end{equation*}
We partition the domain of the integral into three
cases, and calculate the probabilities in each case using
the reflection principle.  

\begin{comment}
A Brownian motion, $W$, begins at 0 and after $t$ time steps 
achieves the value $W_1 = a$.  
Then let $H^+$ be the minimum time that $W$ finishes
reaching the thresholds $a/2+1/2, ~a/2-1/2,~ a/2+1/2$ in that order.
(Let ${H^-}$ be the min time that $W$ reaches the thresholds $a/2-1/2,
~a/2+1/2, ~a/2-1/2$ in that order.)  We define another process $B_t$,
which is a reflection of the process $W_t$ over certain thresholds
(depending on the case).  There are three cases.
\end{comment}

\subsection{Case (i): $a > 1$}
\begin{figure}[h!]
\begin{center}
\epsfig{file=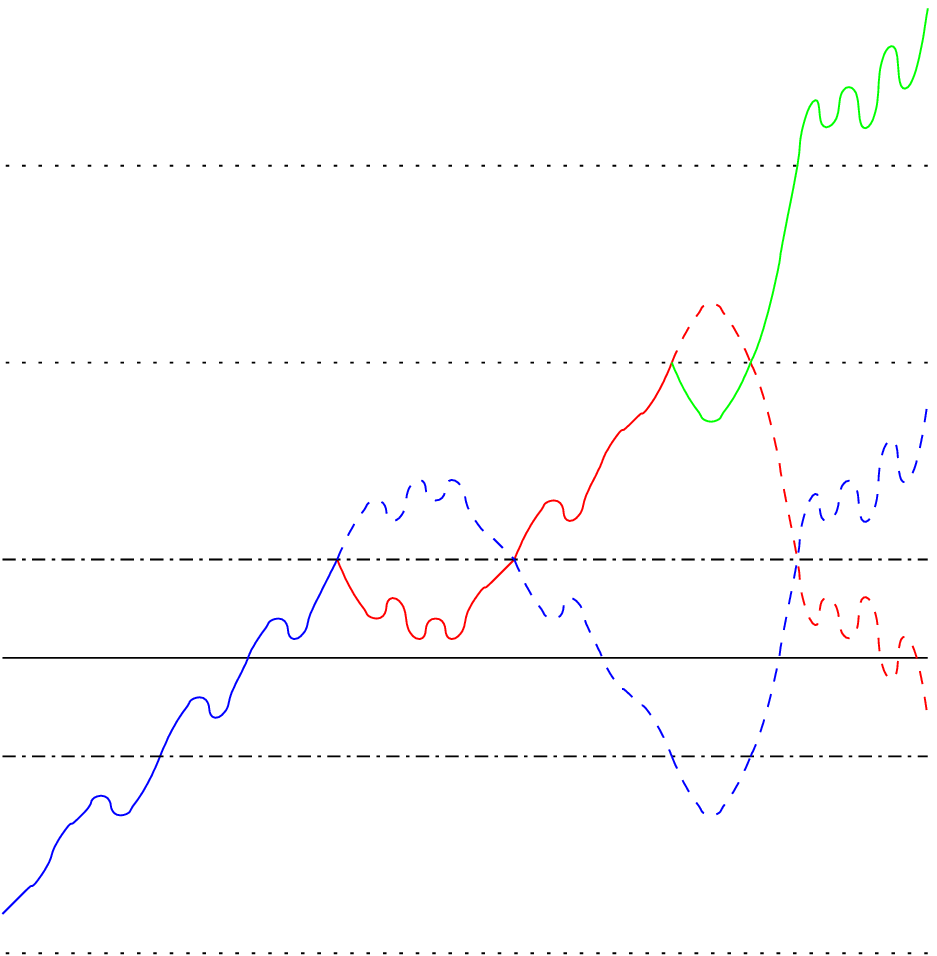, width=6cm}
\caption{\label{interval} Case (i).}
\end{center}
\end{figure}
\noindent

In this case, we only need to calculate the probability that
$H^-_3\le 1$ occurs, since if $H^+_3$ occurs, then $H^-_3$ must also occur.  

To obtain $W'_t$, the process $W_t$ is
reflected the first time it hits $a/2+1/2$, then the first time this
reflected process hits
$a/2+3/2$.  Using reasoning similar to Section~\ref{sec:twice-reflect},
it can be shown that if the process $W_t$ (prior to reflection)
hits $a/2+1/2$, then $a/2-1/2$, then $a/2+1/2$, then satisfies $W_1 = a$
(i.e., it satisfies $H^+_3\le 1$ for fixed $a$), then
it will correspond exactly to a process $W'_t$ that achieves $W'_1 = a+2$.
Therefore,
\begin{eqnarray}
\Pr[H_3^- \le  1~|~ W_1 = a] & = & \frac{\Pr[H^-_3 \le  1, W_1 \in da]}{\Pr[W_1\in da]}\nonumber \\
& = & \Pr[W'_1 \in dx(a+2)]/\Pr[W_1 \in da] \nonumber  \\
& = & \phi(a+2)/\phi(a).\nonumber
\end{eqnarray}
Thus, we have:
\begin{eqnarray}
\int^{\infty}_1 \Pr[H^-_3 \le  1~|~W_1 = a] \phi(a) da. & = & 
\int^{\infty}_1 \phi(a+2) da \nonumber \\ 
& \leq & .0013499. \nonumber  
\end{eqnarray}

\subsection{Case (ii):  $a\le -1$}
Analogous to Case (i).  

\subsection{Case (iii): $-1 < a < 1$}\label{sec:3-barrier-case-3}
By the inclusion-exclusion principle, we have:
\begin{eqnarray*}
\Pr[H^+_3\le 1\mbox{ or } H^-_3\le 1~|~W_1 = a ] &=& 
\Pr[H^+_3\le 1~|~W_1 = a]+\Pr[H^-_3\le 1~|~W_1 = a]\\ 
& &-
\Pr[H^+_3\le 1 \mbox{ and }H^-_3\le 1~|~W_1 = a].\\ 
\end{eqnarray*}
\begin{figure}[h]
\begin{center}
\epsfig{file=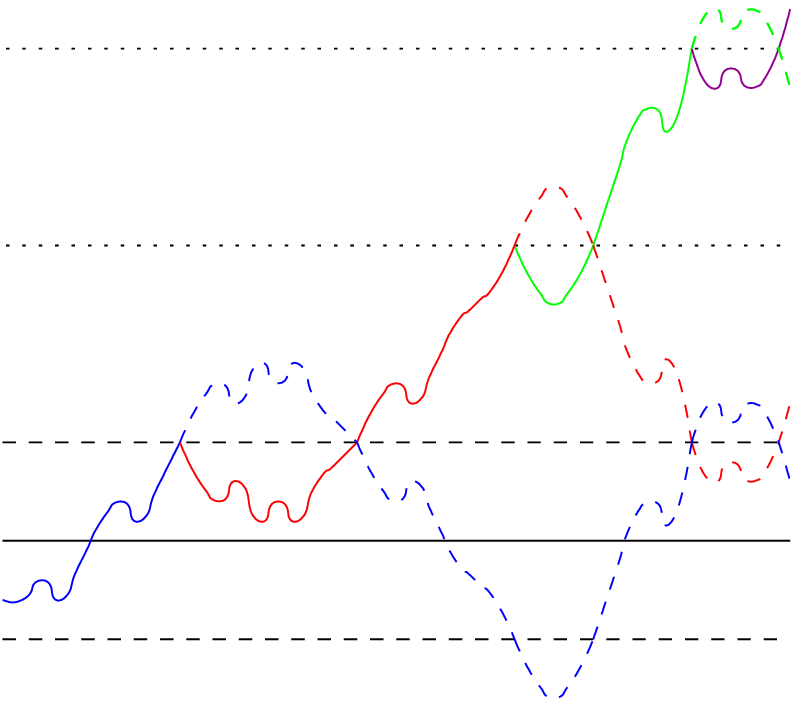, width=6cm}
\caption{\label{case_iib} Case (ii), $H^+$.}
\end{center}
\end{figure}
\noindent
First, we calculate $\Pr[H^+_3\le 1~|~W_1 = a]$.  
The process $W'_t$ is obtained by
reflecting $W_t$ the first time it hits $a/2+1/2$, then
the first the reflected process hits
$a/2+3/2$, then the first time the twice reflected process hits
$a/2+5/2$.  If $W_t$ (prior to reflection) 
hits $a/2+1/2$, then $a/2-1/2$, then $a/2+1/2$, then achieves $W_1 = a$,
then it will correspond exactly to 
a thrice reflected process $W'_t$ that satisfies $W'_1 = 3$.  


We want to calculate:
\begin{eqnarray}\label{eq:case-ii-i}
\int^{1}_{-1} \Pr[H^-_3 \le  1~|~W_1 = a] \phi(a) da. 
\end{eqnarray}
We have: 


\begin{eqnarray}
\Pr[H^-_3 \le  1~|~ W_1 = a] & = & 
\frac{\Pr[H^-_3 \le  1, W_1 \in da]}{\Pr[W_1\in da]}\nonumber \\
& = & \Pr[W'_1 \in dx(3)]/\Pr[W_1\in da] \nonumber  \\
&=& \phi(3)/\phi(a).\nonumber
\end{eqnarray}
Thus, we have:
\begin{eqnarray}
\int^{1}_{-1} \Pr[H^-_3 \le 1~|~W_1 = a] \phi(a) da. & = & 
 2\int^{1}_0 \phi(3) da \nonumber \\ 
& \leq & .0088637. \nonumber  
\end{eqnarray}  

\begin{figure}[h!]
\begin{center}
\epsfig{file=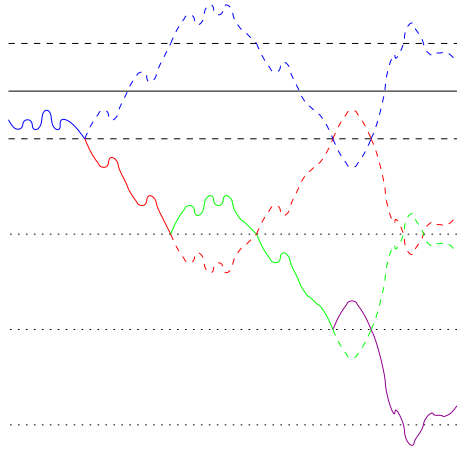, width=6cm}
\caption{\label{case_iia} Case (ii), $H^-$.}
\end{center}
\end{figure}
Now we calculate $\Pr[H^-_3\le 1~|~W_1 = a]$.  
In this case, the process $W'_t$ is obtained by
reflecting $W_t$ the first time it hits $a/2-1/2$, then the first
time the reflected process hits  $a/2-3/2$, then the first time
the twice reflected process hits $a/2-5/2$.  A process $W_t$
that hits $a/2-1/2$, then $a/2+1/2$, then $a/2-1/2$, then satisfies $W_1 = a$
(i.e., it satisfies $H^-_3\le 1$ for fixed $a$) 
will correspond exactly to a 
thrice reflected process $W'_t$
that satisfies $W'_1 = -3$.  
We want to compute:
\begin{eqnarray}\label{eq:case-ii-ii}
\int^{1}_{-1} \Pr[H^-_3 \le 1~|~W_1 = a] \phi(a) da. 
\end{eqnarray}
We have: 
\begin{eqnarray}
\Pr[H^-_3 \le  1~|~ W_1 = a] & = & \Pr[H^-_3 \le  1, W_1 \in da]/
\Pr[W_1\in da]\nonumber \\
& = & \Pr[W'_1\in dx(-3)]/\Pr[W_1\in da] \nonumber  \\
&=& \phi(-3)/\phi(a).\nonumber
\end{eqnarray}
Thus, we have:
\begin{eqnarray}
\int^{1}_{-1} \Pr[H^-_3 \le  t~|~W_t = a] \phi(a) da. 
& = & 2\int^{1}_{0} \phi(-3) da \nonumber \\ 
& \le & .0088637. \nonumber  
\end{eqnarray}


Thus, a naive bound on the probability of three sign changes
would be to add expressions~(\ref{eq:case-ii-i}) and (\ref{eq:case-ii-ii}): 
\begin{eqnarray*}
\int_{-1}^1\Pr[H^{-}_3\le 1 \mbox{ or } H^{+}_3\le 1~|~W_1 = a]\phi(a)da
&\le& \int_0^1 \Pr[H^{-}_3\le 1|W_1 =a]\phi(a)da\\
& &+\int_0^1 \Pr[H^{+}_3\le 1|W_1 =a]\phi(a)da 
= .0017728.
\end{eqnarray*}
The above bound is an overestimate of the probability, because the
event $[H^-_3\le 1 \mbox{ and } H^+_3\le 1~|~W_1 = a]$ is contained in
both (\ref{eq:case-ii-i}) and (\ref{eq:case-ii-ii}).

We now calculate $\Pr[H^-_3\le 1 \mbox{ and } H^+_3\le 1~|~W_1 = a]$.
Note that the event $\set{W|H^-_3\le 1 \mbox{ and } H^+_3\le 1}$
occurs when there are at least four sign changes; either
$H^+_4\le 1$ or $H^-_4\le 1$ occurs, or possibly both.    
Using the same argument as we did for $H^+_3$, $H^-_3$, it can
be shown that:
\begin{equation*}
\int_{-1}^1 \Pr[H^-_4\le 1~|~W_1=a]\phi(a)da + \int_{-1}^1\Pr[H^+_4\le 1~|~W_1=a]
\phi(a)da = 2\int_{-1}^1 \phi(4+a)da \ge
.00269922
\end{equation*}
and that
\begin{equation*}
\int_{-1}^1 \Pr[H^+_5\le 1~|~W_1=a]\phi(a)da + 
\int_{-1}^1 \Pr[H^-_5\le 1~|~W_1 = a]\phi(a)da = 
4\int_0^1\phi(5)da \le 5.94688 \cdot 10^{-6}. 
\end{equation*}

Again applying the inclusion-exclusion principle, we have:
\begin{eqnarray*}
\int_{-1}^1 \Pr[H^-_3\le 1 \mbox{ and } H^+_3\le 1~|~W_1 = a] \phi(a)da&=& 
\int_{-1}^1 \left(\Pr[H^-_4\le 1~|~W_1=a]+\Pr[H^+_4\le 1~|~W_1=a]\right)\phi(a)da\\ & &-
\int_{-1}^1\Pr[H^-_4\le 1 \mbox{ and } H^+_4\le 1~|~W_1 = a]\phi(a)da\\
&\le& 
\int_{-1}^1\left(\Pr[H^-_4\le 1~|~W_1=a]+\Pr[H^+_4\le 1~|~W_1=a]\right)\phi(a)da\\ & &-
\int_{-1}^1\left(\Pr[H^-_5\le 1~|~W_1 = a]+\Pr[H^+_5\le 1~|~W_1 = a]\right)
\phi(a)da\\
&\le&
.0026328.
\end{eqnarray*}
Therefore:
\begin{equation*}
\int_{-1}^1\Pr[H^-\le 1 \mbox{ and } H^+\le 1]\phi(a)d_a\le .0176734-.00263828\le .015035.
\end{equation*}

\subsection{Totals}

Combining the results of the three cases, we arrive at: 
\begin{equation*}
\Pr[H^+_3\le 1 ~{\text{or}} ~H^-_3\le 1]\le
.015035 + .0013499\cdot 2 = .017735.
\end{equation*}

\section{From Brownian Motion to Discrete Random Walks}\label{sec:discrete}

The randomized rounding procedure for our algorithm involves a
discrete random walk; we have proven Lemmas~\ref{lem:one-change}
and~\ref{lem:three-changes} for the continuous process, Brownian
motion.  We show in this section that the discretized random walk of
the rounding procedure will also satisfy Lemmas~\ref{lem:one-change}
and~\ref{lem:three-changes}.

Suppose $W_t$ is a Brownian motion.  As we showed earlier, the
discretized random walk of $s$ steps, 
$w_1, \ldots, w_s$, can be modeled as
the sequence: $w_1 = W_{1}, w_2 = W_{2/s}, w_3 = W_{3/s},\ldots, w_s =
W_1$.

First, consider the question of whether Lemma~\ref{lem:three-changes}
implies that $w_1,\ldots, w_s$ also does not touch the sequence of
barriers $\frac{w_s}{2}+\frac{1}{2}, \frac{w_s}{2}-\frac{1}{2},
\frac{w_s}{2}+\frac{1}{2}$ before time $t = 1$.  Certainly, if $W_t$
does not hit this sequence of barriers, then its discretized version
also does not hit these three barriers, since $w_s = W_1$.  Therefore
Lemma~\ref{lem:three-changes} holds for the discrete random walk as well.

Now consider the question of whether Lemma~\ref{lem:one-change}
implies that $w_1, \ldots, w_s$ hits either of the barriers
$\frac{w_s}{2}+\frac{1}{2}$ or $\frac{w_s}{2}-\frac{1}{2}$.  Note that
if $W_t$ hits the barrier $b$ at time $\tau_b$, it is not necessarily
true that there exists an $i$ such that $w_i \ge b$, since $W_t$ could
have hit $b$ at some time between the steps of the discretized walk.
Therefore, Lemma~\ref{lem:one-change} cannot be immediately adapted to
proving properties of the discretized walk.  We now prove that the
Lemma is true for the discretized walk when the number of steps is a
sufficiently large constant.

Recall that random variables $H^+$ and $H^-$ were defined as the first
times that the Brownian motion $W_t$ hits the barrier defined by
$\frac{W_1}{2}+\frac{1}{2}$ and $\frac{W_1}{2}-\frac{1}{2}$,
respectively.  We slightly strengthen these conditions by defining
random variables $\tilde{H}^+$ and $\tilde{H}^-$ to be the first
times that $W_t$ hits the barriers $\frac{W_1}{2}+\frac{1}{2}+\eta$
and $\frac{W_1}{2}-\frac{1}{2}-\eta$, respectively, for some very
small constant $\eta$.

Since $\eta$ will be chosen to be very 
small, it will not have a large impact
on the distributions of $\tilde H^+$ and $\tilde H^-$
relative to $H^+$ and $H^-$.  
The proof of the following lemma involves the same
calculations as in the proof of
Lemma~\ref{lem:one-change}.  
\begin{lemma}\label{lem:one-change-eta}
For $\eta>0$ chosen sufficiently small,
\begin{equation*}
\Pr[\tilde{H}^+\le 1~{\text{or}}~\tilde{H}^-\le 1] \ge 0.9855.
\end{equation*}
\end{lemma}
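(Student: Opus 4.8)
The plan is to re-run the proof of Lemma~\ref{lem:one-change} essentially verbatim, with the single change that the two barriers $a/2+1/2$ and $a/2-1/2$ are replaced throughout by $a/2+1/2+\eta$ and $a/2-1/2-\eta$, and then to invoke continuity in $\eta$. The argument of Section~\ref{sec:one-change} produces the bound $.985612$ as a fixed finite sum and difference of elementary integrals, each depending smoothly on the barrier locations; perturbing the barriers by $\eta$ perturbs the bound by $O(\eta)$, and since $.985612>.9855$, any $\eta$ below an explicit threshold works.

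In more detail, I would again decompose $\Pr[\tilde H^+\le 1 \text{ or } \tilde H^-\le 1]=\int_{-\infty}^{\infty}\Pr[\tilde H^+\le 1 \text{ or } \tilde H^-\le 1\mid W_1=a]\,\phi(a)\,da$ and split into the same three cases, noting that the breakpoints shift: the upper barrier $a/2+1/2+\eta$ lies at or below $a$ exactly when $a\ge 1+2\eta$, so Case~(i) becomes $a\ge 1+2\eta$ and contributes $\int_{1+2\eta}^{\infty}\phi(a)\,da$, Case~(ii) is its mirror image, and Case~(iii) is $-1-2\eta<a<1+2\eta$. In Case~(iii) I would redo the reflection-principle computations of Sections~\ref{sec:one-change} and~\ref{sec:3-barrier-case-3}: each reflection about a barrier line now moves the relevant target level by $2\eta$, so after $k$ reflections the level shifts by $2k\eta$, giving $\Pr[\tilde H^+\le 1\mid W_1=a]=\phi(1+2\eta)/\phi(a)$, $\Pr[\tilde H^\pm_2\le 1\mid W_1=a]=\phi(a+2+4\eta)/\phi(a)$, $\Pr[\tilde H^\pm_3\le 1\mid W_1=a]$ of the form $\phi(3+6\eta)/\phi(a)$, and so on, paralleling the $\eta=0$ expressions exactly. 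Feeding these into the same inclusion--exclusion bookkeeping (including the Bonferroni-type bounds on $\Pr[\tilde H^+\le 1 \text{ and } \tilde H^-\le 1\mid W_1=a]$ through $\tilde H^\pm_2,\tilde H^\pm_3,\tilde H^\pm_4,\tilde H^\pm_5$) yields a lower bound on $\Pr[\tilde H^+\le 1 \text{ or } \tilde H^-\le 1]$ that is a fixed finite combination of integrals $\int_{c_1(\eta)}^{c_2(\eta)}\phi(x+e\eta)\,dx$ with $e$ an integer and $c_1(\eta),c_2(\eta)\in\{0,\pm(1+2\eta),\infty\}$.

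Each such integral is smooth in $\eta$ near $0$, and at $\eta=0$ the combination equals the $.985612$ of Lemma~\ref{lem:one-change}; hence it is at least $.985612-C\eta$ for an explicit constant $C$, and choosing $0<\eta\le(.985612-.9855)/C$ finishes the proof. The only step needing care is the direction of the perturbation in the terms that enter with a minus sign, i.e. the bounds on the multi-crossing probabilities: these are either exact reflection computations (which merely shift by $2k\eta$ and remain exact) or plain union bounds (valid for every $\eta$), and widening the gap between the barriers can only shrink the multi-crossing events, so nothing degrades beyond the unavoidable $O(\eta)$ from the shifted levels and the slightly enlarged Case~(iii) domain. Thus the main obstacle is purely bookkeeping --- verifying that every breakpoint and every reflected level moves in the claimed $O(\eta)$ fashion and that no sign is mishandled; there is no new idea beyond the continuity observation, which is precisely why this proof ``involves the same calculations'' as Lemma~\ref{lem:one-change}.
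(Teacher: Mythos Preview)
Your proposal is correct and is exactly the approach the paper intends: the paper does not give a separate proof of this lemma but simply states that it ``involves the same calculations as in the proof of Lemma~\ref{lem:one-change},'' which is precisely what you carry out by shifting every barrier by $\pm\eta$, tracking the resulting $O(\eta)$ shifts in the reflected levels and breakpoints, and invoking continuity at $\eta=0$ to pass from $.985612$ to $.9855$. Your bookkeeping of the reflected levels ($1+2\eta$, $a+2+4\eta$, $3+6\eta$, \ldots) and of the case breakpoints ($\pm(1+2\eta)$) is accurate, and your observation that widening the barrier gap only shrinks the multi-crossing probabilities confirms that none of the subtracted inclusion--exclusion terms can spoil the bound beyond the claimed $O(\eta)$.
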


We use the above Lemma to
prove that if the continuous process $W_t$ hits
the barrier $\frac{a}{2}+\frac{1}{2}+\eta$, then 
the discrete random walk will hit the
barrier $\frac{a}{2}+\frac{1}{2}$ with high probability.  
The case for the barrier $\frac{a}{2}-\frac{1}{2}$ is similar.  
\begin{lemma}\label{lem:discretize-2}
If the number
of steps $s$ of the discretized random walk
satisfies $s\ge \frac{c}{\eta^2}$ for some constant $c$, 
then:
\begin{equation*}
\Pr\left[w_{\lceil s \cdot \tau_{\frac{a}{2}+\frac{1}{2}+\eta} \rceil}\ge
  \frac{a}{2}+\frac{1}{2} ~|~ \tau_{\frac{a}{2}+\frac{1}{2}+\eta}\le 1 \right]
\ge .997,
\end{equation*}
where $W_1 = a$ and $\tau_{\frac{a}{2}+\frac{1}{2}}$ is the time the 
continuous process hits the barrier $\frac{a}{2}+\frac{1}{2}$.  
\end{lemma}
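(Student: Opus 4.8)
The plan is to fix $b=\tfrac a2+\tfrac12+\eta$ and write $\tau=\tau_b$, so that $w_{\lceil s\tau\rceil}=W_{\lceil s\tau\rceil/s}$ is the discretized walk read at the first grid point at a time $\ge\tau$, and to show that, conditioned on $\tau\le1$ and on $W_1=a$, the bad event $\mathrm{Bad}:=\{W_{\lceil s\tau\rceil/s}<\tfrac a2+\tfrac12\}$ has probability at most $.003$. I would first reduce to $-1<a<1$: if $a\ge1$ then $w_s=W_1=a\ge\tfrac a2+\tfrac12$ outright, and $a\le-1$ is the symmetric situation, so in the remaining ranges the walk already lands beyond the relevant barrier at its last step. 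The first real step is a path-wise reduction: since $W_\tau=b$ by continuity and $\lceil s\tau\rceil/s\in[\tau,\tau+\tfrac1s)$ with $\lceil s\tau\rceil/s\le1$, the event $\mathrm{Bad}$ forces $W$ to descend from $b$ to $b-\eta=\tfrac a2+\tfrac12$ somewhere in $[\tau,\min(\tau+\tfrac1s,1)]$. Letting $\sigma=\inf\{u>\tau:W_u=b-\eta\}$ one gets $\mathrm{Bad}\cap\{\tau\le1\}\subseteq\{\tau\le1,\ \sigma\le\min(\tau+\tfrac1s,1)\}$, so it suffices to bound the conditional probability of the right-hand event.

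Next I would evaluate the numerator and denominator of that conditional probability exactly, in the style of Section~\ref{sec:one-change}, using the hitting-time densities and the Reflection Principle. For the denominator, $b>0$ and $b>a$, so reflecting the path about $y=b$ at time $\tau$ equates $W_1$ with $2b-W_1$ and gives $\Pr[\tau\le1,\ W_1\in da]=\phi(2b-a)\,da=\phi(1+2\eta)\,da$. For the numerator I would decompose by $\tau=t$ and $\sigma-\tau=v\in[0,\min(\tfrac1s,1-t)]$, using the hitting-time density $\tfrac{b}{\sqrt{2\pi}\,t^{3/2}}e^{-b^2/2t}$ for $\tau$, the density $f_\eta(v):=\tfrac{\eta}{\sqrt{2\pi}\,v^{3/2}}e^{-\eta^2/2v}$ for the downcrossing $\sigma-\tau$ (by the strong Markov property at $\tau$ this is the first hitting time of a level at distance $\eta$), and the $\N(0,1-t-v)$ density, evaluated at $c_0:=a-(b-\eta)$, for the final leg from $b-\eta$ to $a$. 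Swapping the order of integration and doing the $t$-integral first, the contribution of a fixed $v$ is the density of $W_{1-v}$ restricted to $\{\tau_b\le1-v\}$ evaluated at $b+c_0$; since $b+c_0=a+\eta<b$, reflecting about $y=b$ equates this with the $\N(0,1-v)$ density at $2b-a-\eta=1+\eta$, i.e.\ $\phi_{1-v}(1+\eta)$. So the numerator is $\bigl(\int_0^{1/s}f_\eta(v)\,\phi_{1-v}(1+\eta)\,dv\bigr)\,da$, and the key point of the computation is that both effective barriers — $1+\eta$ in the numerator, $1+2\eta$ in the denominator — have become independent of $a$.

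To finish, I would bound $\phi_{1-v}(1+\eta)\le\phi_1(1+\eta)=\phi(1+\eta)$ (valid because $(1+\eta)^2>1\ge1-v$, so $\rho\mapsto\phi_\rho(1+\eta)$ is increasing on $\rho\le1$) and use $\int_0^{1/s}f_\eta(v)\,dv=\Pr[\tau_\eta\le1/s]=2\Pr[\N(0,1)\ge\eta\sqrt s]$ from the Reflection Principle, obtaining
\[
\Pr\bigl[\mathrm{Bad}\mid\tau\le1,\ W_1=a\bigr]\ \le\ \frac{\phi(1+\eta)}{\phi(1+2\eta)}\cdot 2\Pr[\N(0,1)\ge\eta\sqrt s]\ =\ e^{\eta+\frac32\eta^2}\cdot 2\Pr[\N(0,1)\ge\eta\sqrt s].
\]
If $s\ge c/\eta^2$ then $\eta\sqrt s\ge\sqrt c$, and for $\eta$ small the prefactor $e^{\eta+\frac32\eta^2}$ is below, say, $1.1$; so choosing $c$ large enough (e.g.\ $c=10$, where $2\Pr[\N(0,1)\ge\sqrt{10}]\le.0016$) forces the right side below $.003$, which is $1-.997$.

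The step I expect to be the main obstacle is handling the conditioning on $W_1=a$. Without it the estimate is immediate: by the strong Markov property at $\tau$ the bad probability is just $\Pr[\tau_\eta\le1/s]=2\Pr[\N(0,1)\ge\eta\sqrt s]$, the chance that a fresh Brownian motion dips $\eta$ below a level it has just touched within time $1/s$. The conditioning couples the post-$\tau$ increments to the terminal value $a$, and when $\tau$ is very close to $1$ the ``next grid point'' is pulled toward $a$ by the bridge, which for $a<1$ lies below $\tfrac a2+\tfrac12$; treated naively this destroys any pointwise bound. The device that rescues the argument is to expose \emph{both} hitting times $\tau$ and $\sigma$ together with the final Gaussian leg, and then reflect twice so that the effective endpoints no longer see $a$ — which brings us back to essentially the unconditioned estimate.
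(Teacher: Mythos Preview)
Your central computation---reduce $\mathrm{Bad}$ to the event that the path drops by $\eta$ within time $1/s$ of first touching $b$, then evaluate the conditional probability by decomposing over the two hitting times $\tau,\sigma$ and collapsing the $t$-integral via reflection---is correct for $-1-2\eta<a<1$ and is a genuinely different route from the paper.  The paper never introduces $\sigma$; instead it first throws away the tails $|a|>10$ and $\tau_{b}>1-c$ (with $c\approx 10^{-9}$), and then on the remaining set reads off the law of $W_{\lceil s\tau\rceil/s}$ from the Brownian-bridge increment formula $W_{T+t}\mid W_T=b,\,W_1=a\sim \N\bigl(b-\tfrac{t(b-a)}{1-T},\,\tfrac{t(1-T-t)}{1-T}\bigr)$.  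Your approach avoids the awkward restriction to $\tau\le 1-c$ and yields a vastly better constant ($s\gtrsim 10/\eta^2$ versus the paper's $s\gtrsim 20/(c\eta^2)\approx 2\times 10^{10}/\eta^2$); the paper's approach is more direct in that it analyzes the actual grid value rather than an auxiliary hitting time.

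The gap is your reduction to $|a|<1$.  Observing that $w_s=a\ge \tfrac a2+\tfrac12$ when $a\ge 1$ says nothing about the specific index $\lceil s\tau\rceil$ the lemma is about, and ``$a\le -1$ is the symmetric situation'' is false for the \emph{upper} barrier: the map $W\mapsto -W$ sends the upper barrier at $a$ to the lower barrier at $-a$, not to the upper one.  In fact your double-reflection bound already extends to all $a\ge 1$ with the same (or better) constant, since then $a+\eta>b$ and the numerator endpoint becomes $\phi_{1-v}(a+\eta)$ while the denominator is $\phi(a)$ or $\phi(1+2\eta)$, giving a ratio $\le e^{\eta+\tfrac32\eta^2}$.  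The real failure is $a<-1-2\eta$: there $b<0$, so both endpoints $a$ and $a+\eta$ lie below $b$ and neither reflection is needed; the ratio becomes $\phi(a+\eta)/\phi(a)=e^{-a\eta-\eta^2/2}$, which blows up as $a\to-\infty$ and is \emph{not} uniformly $\le .003$.  The fix is the same truncation the paper uses---restrict to $|a|\le K$ (any $K$ of order $1/\eta$, or just $K=10$) and absorb $\Pr[|W_1|>K\mid\tau\le 1]$ into the slack---after which your argument goes through cleanly.
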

\begin{proof}
Let $b= \frac{a}{2}+\frac{1}{2}$ be the barrier of interest.
Since $b$ depends on
value of $W_1 = a$, as in Sections~\ref{sec:one-change} and~
\ref{sec:three-changes}, we will work with probabilities
conditioned on the event $\set{W|W_1 = a, \tau_b\le 1}$.  

Note that
{\bf (a)} $a = W_1\sim N(0, 1)$; therefore, 
with probability at least $.999$, $|a|\le 10$
and $b +\eta< 6$.   
Also, {\bf (b)} 
the probability that 
$\tau_{b+\eta}\le 1-c$, for some constant $c>0$, conditioned
on $\tau_{b+\eta}\le 1$, is at least 0.999.  This is because
the density function of $\tau_{b+\eta}$ conditioned
on $W_1 = a$ 
is given by:
\begin{equation*}
\Pr\left[\tau_{b+\eta}\in dt~|~W_1 = a\right] = 
\frac{b+\eta}{\sqrt{2\pi}t^{3/2}}\exp
\left(-\frac{(b+\eta)^2}{2t}\right)\frac{1}{\sqrt{1-t}}
\phi\left(\frac{b+\eta-a}{\sqrt{1-t}}\right).
\end{equation*}
Therefore,
\begin{eqnarray}
\Pr\left[\tau_{b+\eta}\in [1-c, 1]~|~W_1=a, |a|\le 10,
\tau_{b+\eta}\le 1\right] 
&\le& \frac{1}{(.999)^2}
\int_{1-c}^1 
\frac{b+\eta}{\sqrt{2\pi}t^{3/2}}\exp
\left(-\frac{(b+\eta)^2}{2t}\right)\\
& &\cdot \frac{1}{\sqrt{1-t}}
\phi\left(\frac{b+\eta-a}{\sqrt{1-t}}\right)dt\\
&\le& \frac{6}{((.999)^2\cdot \pi)}\int_{1-c}^1
\frac{1}{\sqrt{1-t}}dt \\
&\le& \frac{6}{((.999)^2\cdot \pi)} \left(|^{1}_{1-c} ~(-2 \sqrt{1-t})\right)\\
&\le& \frac{6}{((.999)^2\cdot \pi)} (2 \sqrt{c}) \le .001,
\end{eqnarray}
for appropriately chosen $c$.  In particular $c \approx 10^{-9}$ is
sufficiently small.

If $\tau_{b+\eta}$ is the time that the process $W_t$
hits the barrier $b+\eta$, let the index 
$\lceil s\cdot\tau_{b+\eta}\rceil$ denote the step in the discretized
random walk that immediately follows $\tau_{b+\eta}$.
The value of this step is
$w_{\lceil s\cdot \tau_{b+\eta}\rceil} = W_{\lceil s\cdot \tau_{b+\eta}\rceil/s}$.
Intuitively, this value should be very close to $b+\eta$ if the number
of steps is sufficiently large.  
Indeed, we will prove the lemma by showing that
if the number of steps in the discrete random
walk satisfies $s\ge \frac{20}{c\eta^2}$, then
\begin{equation*}
\Pr\left[W_{\frac{\lceil s\cdot\tau_{b+\eta}\rceil}{s}}\ge \frac{a}{2}+\frac{1}{2} ~|~
\tau_{b+\eta}\le 1-c, W_1 = a, |a|\le 10\right]
\ge .999.
\end{equation*}

Suppose that $W$ is conditioned on reaching the barrier $b+\eta$ at
time $T$ and that $W$ is restricted to satisfying $W_1 = a$.  We use
basic properties of the distribution of the increments of a Brownian
Bridge (see \cite{jChang} for details) to show that
the value of a Brownian motion at time $T+t<1$, under the condition
that $W_T=b+\eta$ and $W_1 = a$, has the following distribution:
\begin{equation}\label{eq:bridge-distribution}
W_{T+t}|W_T = b+\eta, W_1 = a \sim N
\left(b+\eta-\frac{t (b+\eta-a)}{1-T}, 
\frac{t}{1-T}\cdot (1-T-t)\right).
\end{equation}  
Note that $\lceil s\cdot\tau_{b+\eta}\rceil$ is
the index of the closest step in the discretization to $\tau_{b+\eta}$
and that $\lceil s\cdot\tau_{b+\eta}\rceil/s - \tau_{b+\eta}\le 1/s
\le c\eta^2/20$.  
If $a\le 10, T< (1-c)$ and $s\ge 20/(c\eta^2)$, 
then Equation (\ref{eq:bridge-distribution})
implies that 
$w_{\lceil s\cdot \tau_{b+\eta}\rceil}=W_{\lceil s\cdot\tau_{b+\eta}\rceil/s}$ 
is distributed with mean 
at least $b+\eta/2$
and variance at most $\eta^2/20$.
Thus, if $s\ge 20/(c\eta^2)$,
\begin{equation*}
\Pr[w_{\lceil s\cdot\tau_{b+\eta}\rceil}\le b~|~
|a|\le 10,\tau_{b+\eta}\le 1-c, W_1 = a] \leq .001.
\end{equation*}
The Lemma follows.
\end{proof}

Lemma~\ref{lem:discretize-2} can thus be applied to prove Theorem
\ref{sign_change_thm}.

\section{Correlated walks}\label{sec:correlated}

To prove an approximation ratio of our rounding algorithm, we need to
show that the positions of $x_i$ and $x_j$ (corresponding to the
constraint $x_j -x_i \equiv d_{ij} (\bmod ~s)$), determined by the
random walks $w^i$ and $w^j$, are close to the required distance if
the vectors $v_i^0$ and $v_j^{d_{ij}})$ are close.  In other words,
  without loss of generality, let us assume that for a fixed
  constraint, we have $d_{ij} = 0$.  Then our goal is to show that the
  distance between the two positions assigned by our rounding
  procedure to $x_i$ and $x_j$ are close if the vectors $v_i^0$ and
  $v_j^0$ are close.  After extensive computational investigation (on
  solutions obeying the constraints of $(P^+)$), we believe the
  following conjecture holds.
\begin{conjecture}
  In our rounding scheme, the expected distance between $x_i$ and
  $x_j$ is bounded above by $\frac{\theta}{2\pi}$ if both $w^i$ and
  $w^j$ each have exactly one extreme sign change.
\end{conjecture}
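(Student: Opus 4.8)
Throughout we use the reduction $d_{ij}=0$ made just before the conjecture. The plan is to reduce the statement, via two symmetry/coupling steps, to a two-dimensional version of the reflection-principle estimates of Section~\ref{sec:sign_change}, and finally to a statement about which feasible cross-correlation is the worst. First, as in Section~\ref{sec:discrete}, the two correlated discrete walks $\{v_i^k\cdot r\}$ and $\{v_j^k\cdot r\}$ are simultaneous discretizations of a pair of jointly Gaussian Brownian motions $W^i$, $W^j$; the events ``$w^i$ (resp.\ $w^j$) has exactly one extreme sign change'' and the induced positions $P_i,P_j$ transfer to the continuous process with error tending to $0$ as $s\to\infty$ (this is the role of Lemma~\ref{lem:discretize-2} together with the $\eta$-perturbation of Lemma~\ref{lem:one-change-eta}). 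So it suffices to prove the bound for the continuous positions on the circle of circumference $s$, conditioned on the event $E$ that both walks have exactly one extreme sign change with threshold $1$.

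\textbf{Step 2 (rotational symmetry collapses $\mathbb{E}[d(P_i,P_j)]$ to a single ``diameter'' probability).} Constraint~\eqref{equal-vec_sdp} makes the Gram matrix of the combined vector system circulant, so the joint process $k\mapsto(v_i^k\cdot r,\,v_j^k\cdot r)$ is stationary under cyclic shifts of $k$. The rounding is equivariant under such a shift (shifting $k$ by $c$ shifts both $P_i$ and $P_j$ by $-c$), and $E$ is shift invariant; hence, conditioned on $E$, the law of $(P_i,P_j)$ is invariant under $(P_i,P_j)\mapsto(P_i+c,P_j+c)$, i.e.\ after further conditioning on $\delta:=P_i-P_j$ the position $P_i$ is uniform on the circle. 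A Crofton-type computation then gives
\[
\mathbb{E}\!\left[\tfrac{d(P_i,P_j)}{s}\ \middle|\ E\right]
=\tfrac12\,\Pr\!\left[\,P_i,P_j\ \text{lie on opposite sides of the fixed diameter through positions }0\text{ and }s/2\ \middle|\ E\,\right],
\]
since for $P_i$ uniform and $\delta$ fixed the chance of being separated by that diameter equals exactly $2d(0,\delta)/s$. Thus it suffices to bound this ``separation probability'' by $\theta/\pi$ (twice the normalized distance $\theta/(2\pi)$ the relaxation assigns to the pair).

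\textbf{Step 3 (identify the separation event; reflect each path once).} From the case analysis behind Section~\ref{sec:sign_change} one checks that on $E$, up to an event of probability $O(0.04)$, the position $P_i$ lies in the upper half-circle exactly when $W^i$ reaches its \emph{lower} barrier $W^i_1/2-1/2$ during $[0,1]$ (i.e.\ ``$H^-\le 1$'' for $W^i$) rather than its upper barrier $W^i_1/2+1/2$; likewise for $P_j$. So, up to this error, ``$P_i,P_j$ separated by the diameter'' is the event that $W^i$ hits its lower barrier and $W^j$ hits its upper barrier, or vice versa. Its probability is obtained by first conditioning on $(W^i_1,W^j_1)=(a_i,a_j)$---a pair of standard Gaussians with correlation $v_i^0\cdot v_j^0$---and reflecting $W^i$ about its barrier and $W^j$ about its barrier, exactly as in Sections~\ref{sec:one-change}--\ref{sec:twice-reflect} but now for the two-coordinate process $(W^i,W^j)$; integrating the resulting shifted densities against the bivariate Gaussian yields a closed form which, for the rigid configuration described below, equals $\tfrac12(1-v_i^0\cdot v_j^0)=\theta/\pi$, so the bound is tight.

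\textbf{The main obstacle.} The difficulty is that the joint law of $W^i,W^j$---hence the two-dimensional reflected event in Step 3---is not a function of $v_i^0\cdot v_j^0$ alone: constraint~\eqref{equal-vec_sdp} permits any circulant cross-correlation $g(a-b)=v_i^a\cdot v_j^b$, and different $g$ with the same value $g(0)=v_i^0\cdot v_j^0$ give different separation probabilities. One must therefore show that, over all $g$ arising from feasible solutions of $(P^+)$---where the extra inequalities~\eqref{non-neg-sdp} restrict $g$---the separation probability is maximized by the rigid configuration in which the constellation of $x_j$ is a cyclic relabeling of that of $x_i$ (for which $P_i-P_j$ is deterministic and the bound holds with equality). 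This is a monotonicity statement---spreading one constellation relative to another can only decrease the chance the rounding separates them---and it is this step, rather than any individual reflection computation, that I expect to resist a clean argument; it is precisely the behaviour our computational experiments corroborate. A secondary loose end is controlling the $O(0.04)$-probability contributions of the ``no sign change'' and ``three-or-more sign changes'' cases in Step 3, which the estimates of Lemmas~\ref{lem:one-change} and~\ref{lem:three-changes} (used to prove Theorem~\ref{sign_change_thm}) should absorb.
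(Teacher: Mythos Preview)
The paper does \emph{not} prove this statement: it is explicitly presented as an open conjecture, supported only by computational experiments and by the ``area between the walks'' lemma that follows it (which the authors themselves say ``does not immediately lead to a proof that the positions of their extreme sign changes are close''). So there is no paper proof to compare your proposal against; what you have written is a plausible attack on an open problem, and you correctly flag that the key step---optimizing the separation probability over all circulant cross-correlations $g$ feasible for $(P^+)$---is the part you do not know how to do.

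Beyond that acknowledged gap, there is a concrete error in Step~3. For the rigid configuration (the constellation of $x_j$ a cyclic shift of that of $x_i$ by $c$), one has $P_i-P_j\equiv c$ deterministically, so by your own Crofton identity the separation probability equals $2d(0,c)/s=(1-\cos\theta)/2$. You then write ``$\tfrac12(1-v_i^0\cdot v_j^0)=\theta/\pi$,'' but $(1-\cos\theta)/2\ne\theta/\pi$ except at $\theta\in\{0,\pi\}$; in fact $(1-\cos\theta)/2<\theta/\pi$ strictly on $(0,\pi)$. Hence the rigid configuration is \emph{not} tight for the conjectured bound, and therefore cannot be the extremal case in the sense you propose. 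Either the conjectured constant $\theta/(2\pi)$ is slack and the correct extremal configuration is something other than rigid, or your reduction to the ``separation across a fixed diameter'' event loses a nontrivial factor; in either case the monotonicity statement ``rigid is worst'' would not deliver the conjecture as stated. A smaller issue: your ``up to an event of probability $O(0.04)$'' caveat is misplaced, since the conjecture already conditions on $E$ (both walks having exactly one extreme sign change), so the no-sign-change and three-or-more-sign-change cases are excluded by hypothesis; whatever slack you need in the identification ``$P_i$ in the upper half $\Leftrightarrow$ $H^-\le1$ for $W^i$'' must come from a different source and be justified on its own terms.
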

Proving the above conjecture would lead to an approximation guarantee
slightly below $\alpha_{GW} = .87856$, because we do not have an
extreme sign change with probability 1.  

We can show that if $v_i^0$ and $v_j^0$ have a small angle, then the
two walks are (globally) close to each other in the sense that the area between
the two walks is small.  However, this does not immediately lead to a
proof that the positions of their extreme sign changes are close.  
\begin{lemma}
Given two unit vectors $x$ and $y$ with angle $\theta$, and a vector
$r \in \mathbbm{R}^n$ with each coordinate drawn from $\N(0,1)$, then,
\vspace{-2mm}
\begin{eqnarray*}
\text{E}[|x \cdot r - y \cdot r|] = \frac{2 \sqrt{2}}{\sqrt{\pi}}\sin{\frac{\theta}{2}}.
\end{eqnarray*}
\end{lemma}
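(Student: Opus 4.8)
The plan is to reduce the quantity to the mean absolute value of a single Gaussian random variable. First I would observe that $x\cdot r - y\cdot r = (x-y)\cdot r$, and that this is a linear functional of the vector $r$ whose coordinates are i.i.d.\ $\N(0,1)$. Hence $(x-y)\cdot r$ is itself a centered Gaussian, with variance equal to $\|x-y\|^2$; concretely, if $x-y = (c_1,\dots,c_n)$ then $(x-y)\cdot r = \sum_k c_k r_k \sim \N\!\left(0,\sum_k c_k^2\right) = \N(0,\|x-y\|^2)$.

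Next I would compute $\|x-y\|^2$ in terms of $\theta$. Since $x$ and $y$ are unit vectors with $x\cdot y = \cos\theta$, we get $\|x-y\|^2 = \|x\|^2 + \|y\|^2 - 2\,x\cdot y = 2 - 2\cos\theta$, and the half-angle identity $1-\cos\theta = 2\sin^2(\theta/2)$ turns this into $\|x-y\|^2 = 4\sin^2(\theta/2)$. Therefore $(x-y)\cdot r$ has the same distribution as $2\sin(\theta/2)\cdot Z$ for a standard normal $Z$ (using $\sin(\theta/2)\ge 0$ for $\theta\in[0,\pi]$).

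Finally I would invoke the standard fact that $\E[|Z|] = \sqrt{2/\pi}$ for $Z\sim\N(0,1)$, which follows from the elementary integral $\int_{-\infty}^{\infty} |z|\,\tfrac{1}{\sqrt{2\pi}}e^{-z^2/2}\,dz = 2\int_{0}^{\infty} z\,\tfrac{1}{\sqrt{2\pi}}e^{-z^2/2}\,dz = \sqrt{2/\pi}$. Combining the three steps gives
\begin{equation*}
\E[|x\cdot r - y\cdot r|] = 2\sin\tfrac{\theta}{2}\cdot\E[|Z|] = 2\sin\tfrac{\theta}{2}\cdot\sqrt{\tfrac{2}{\pi}} = \frac{2\sqrt{2}}{\sqrt{\pi}}\sin\frac{\theta}{2}.
\end{equation*}
There is essentially no obstacle here: the statement is a direct computation, and the only mild subtleties are remembering the half-angle identity and the value of the half-normal mean; both are standard. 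One could optionally remark that the identity holds verbatim for any $n\ge 1$ and only depends on $x,y$ through the angle $\theta$, which is the property actually used in the surrounding discussion.
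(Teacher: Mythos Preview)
Your proof is correct and follows essentially the same approach as the paper: both identify $x\cdot r - y\cdot r$ as a centered Gaussian with standard deviation $2\sin(\theta/2)$ and then use $\E[|Z|]=\sqrt{2/\pi}$. The only cosmetic difference is that the paper fixes explicit 2D coordinates $x=(\cos\tfrac{\theta}{2},\sin\tfrac{\theta}{2})$, $y=(\cos\tfrac{\theta}{2},-\sin\tfrac{\theta}{2})$ to read off the variance, whereas you compute $\|x-y\|^2$ directly via the half-angle identity.
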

\begin{proof}
Let $x = (\cos{\frac{\theta}{2}}, ~\sin{\frac{\theta}{2}})$ and $y =
(\cos{\frac{\theta}{2}}, - \sin{\frac{\theta}{2}})$.  Let $r = (r_1, r_2)$. 
\begin{eqnarray}
\text{E}[|x \cdot r - y \cdot r|] = |2 r_2 \sin{\frac{\theta}{2}}| =
\text{E}[|r_2|]\cdot 2 \sin{\frac{\theta}{2}}. \nonumber
\end{eqnarray}
The expected value of $r_2$ given that it is non-negative is
$\frac{\sqrt{2}}{\sqrt{\pi}}$.  Since $\sin{\frac{\theta}{2}}$ is
always non-negative for $\theta$ from 0 to $\pi$, the above statement follows by linearity of expectation.
\end{proof}

\vspace{3mm}

If we consider the random walks on the interval $[0,1]$ (i.e. we map
the interval $[0,2]$ to the smaller interval $[0,1]$), then the
expected area between the two walks is
$\frac{2\sqrt{2}}{\sqrt{\pi}}\sin{\frac{\theta}{2}}$.  Thus, as the
contribution to the objective function increases, the two walks
converge and the positions assigned to them by the rounding procedure
should converge to one another.

\begin{figure}[h!]
\begin{center}
\epsfig{file=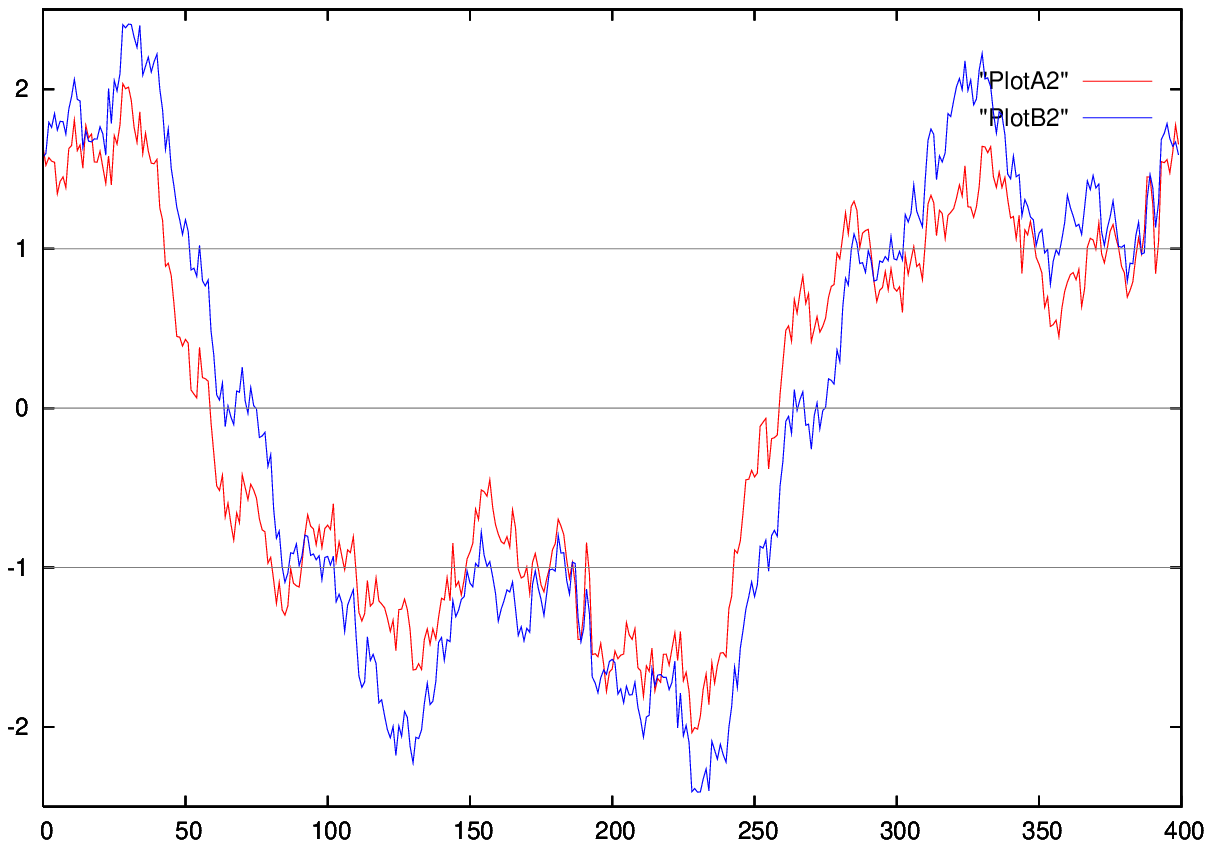, width=6cm}
\epsfig{file=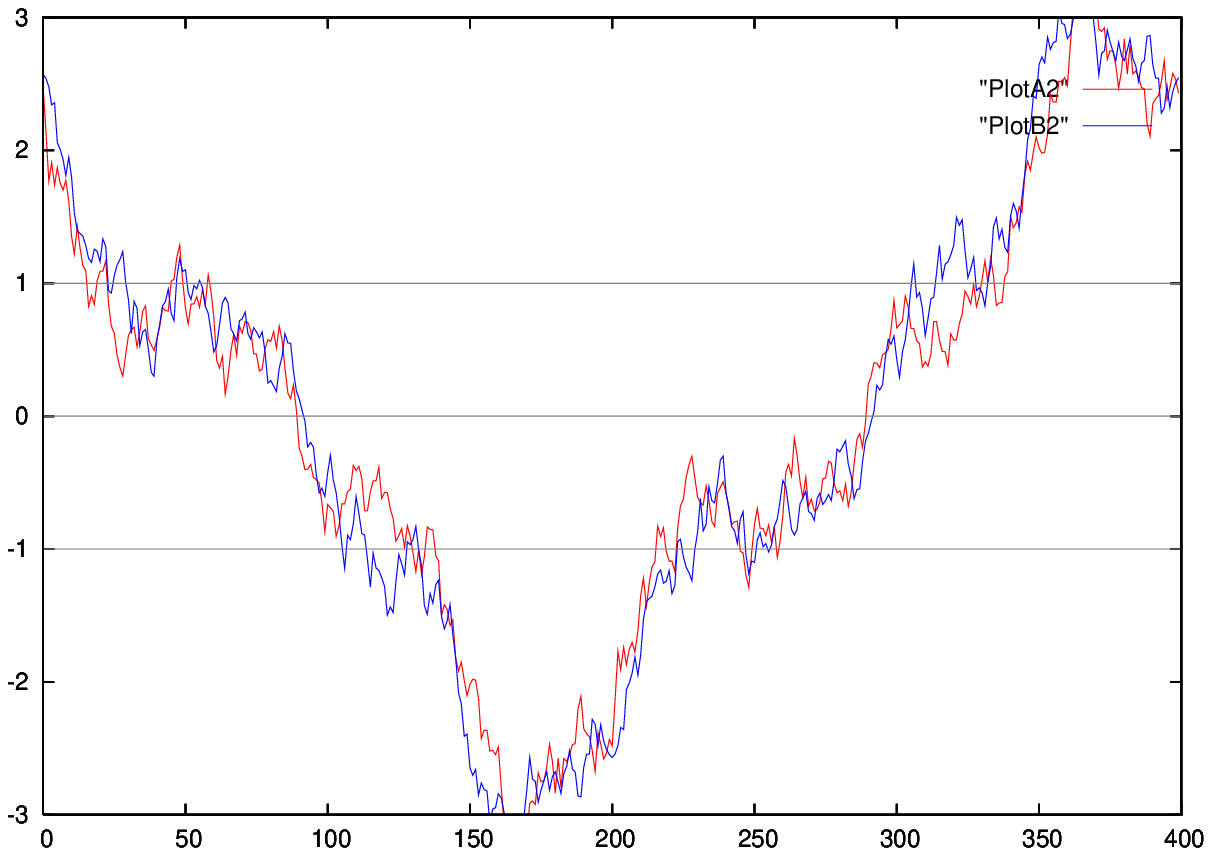, width=6cm}
\caption{In the first example,
      $\cos{\theta} = .86$.  In the second, $\cos{\theta} = .945455$.}
\end{center}
\end{figure}

 \begin{figure}[h]
\begin{center}
  \epsfig{file=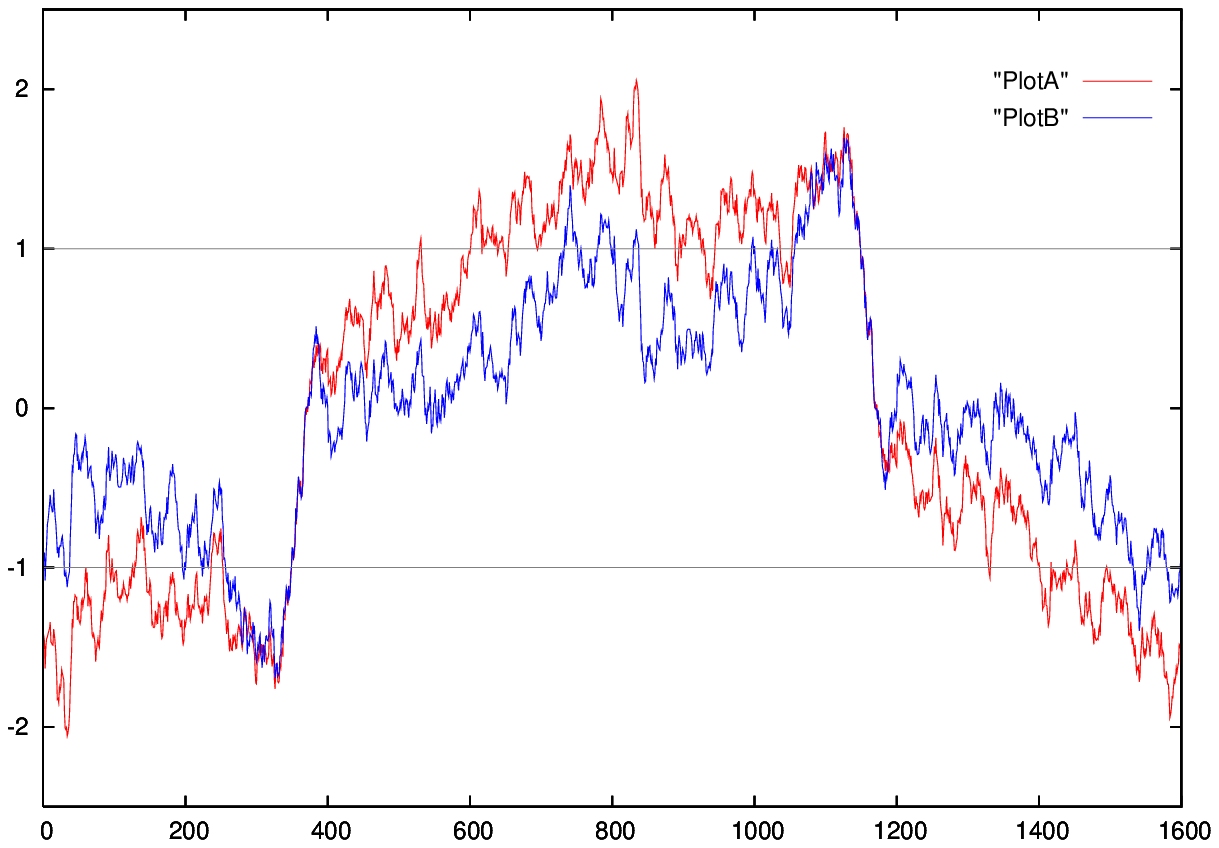, width=7.4cm} 
\epsfig{file=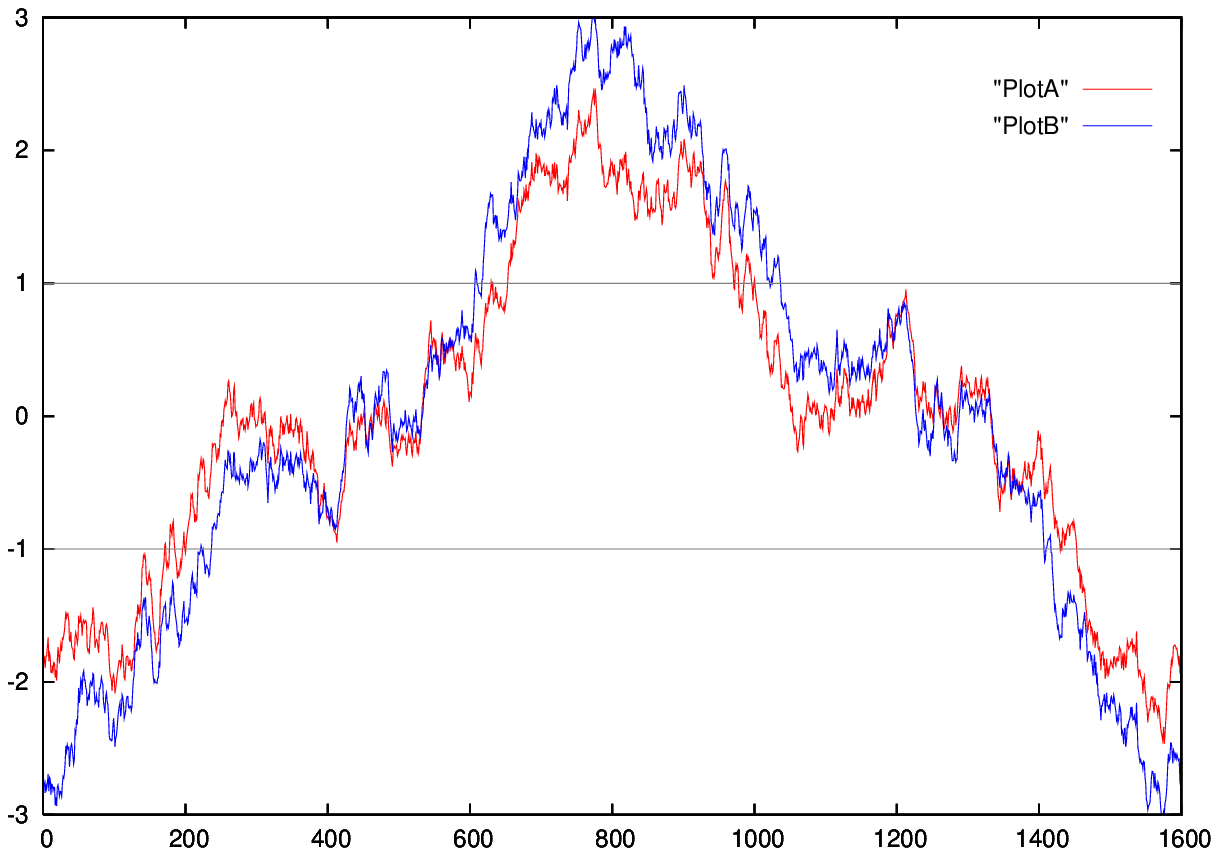, width=7.4cm} 
\epsfig{file=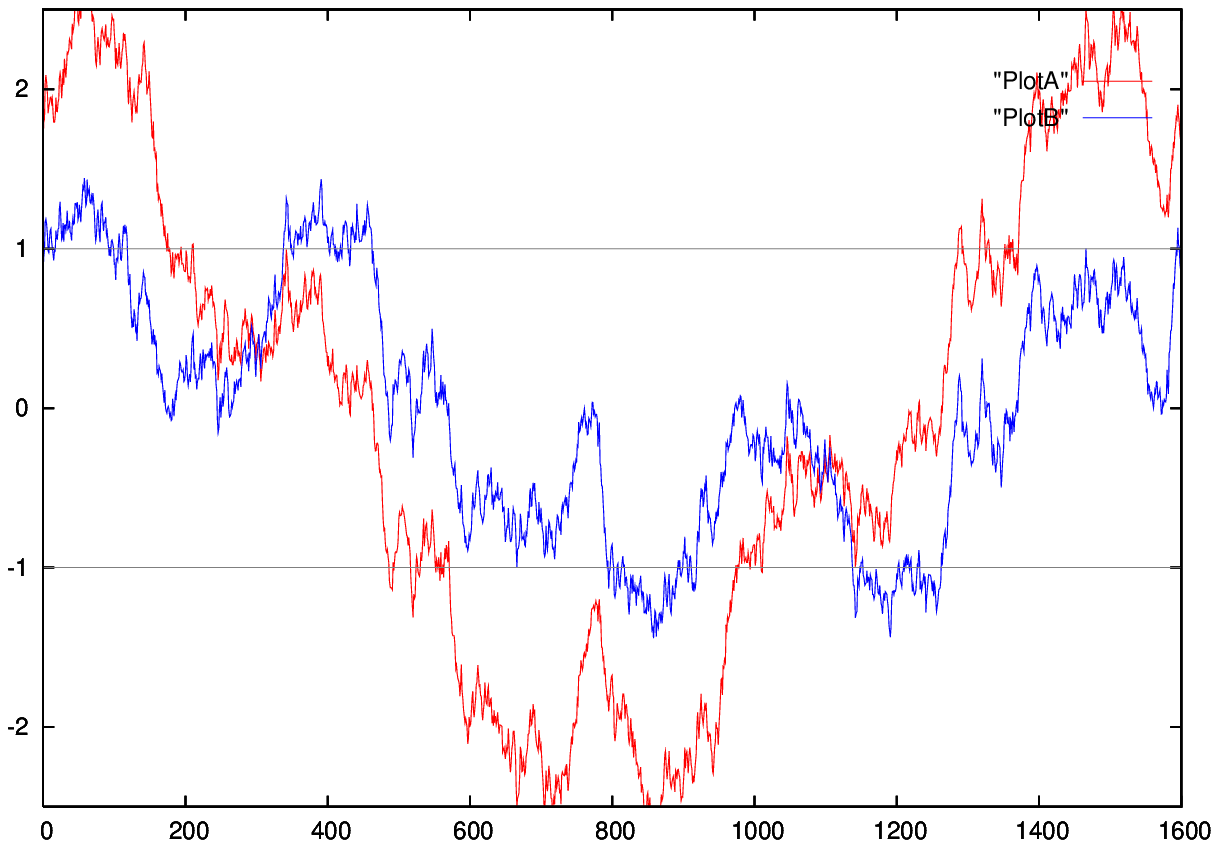, width=7.4cm}
\caption{More examples of correlated walks.}
\end{center}
\end{figure}

\clearpage

\section*{Acknowledgements} We would like to thank Martin Becker and
Larry Shepp for helpful discussions about Brownian motion.  Most of
this work was done in 2007 at the Max-Planck-Institut f\"ur Informatik
in Saarbr\"ucken, Germany.  \bibliography{modp}

\end{document}